\documentclass[a4paper]{article}
\usepackage{comment}
\usepackage[english]{babel}
\usepackage[utf8x]{inputenc}
\usepackage[T1]{fontenc}
\usepackage{amssymb}
\usepackage{amsthm}
\usepackage{multicol}
\usepackage[a4paper,top=2cm,bottom=1.5cm,left=2cm,right=2cm,marginparwidth=2cm]{geometry}

\usepackage{enumerate}

\def\R{{\cal R}}

\newcommand{\diag}{\operatorname{diag}}

\newcommand{\T}{\operatorname{T}}
\newcommand{\tr}{\operatorname{tr}}
\newcommand{\re}{\operatorname{R}}
\newcommand{\Su}{\operatorname{S}}
\newcommand{\In}{\operatorname{I}}
\newcommand{\xx}{\operatorname{X}}
\newcommand{\lag}{\operatorname{lag}}
\newcommand{\eul}{\operatorname{eul}}

\newcounter{acounter}

\usepackage[export]{adjustbox}
\usepackage{subfig}
\usepackage{amsmath}
\usepackage{graphicx}
\usepackage[colorinlistoftodos]{todonotes}
\usepackage[colorlinks=true, allcolors=blue]{hyperref}
\usepackage{float}
\usepackage{tikz}

\usepackage{amsmath,amsfonts,amssymb,amsthm}
\usepackage{mathtools}
\usepackage{commath}
\let\norm\undefined 
\DeclarePairedDelimiter\norm{\lVert}{\rVert}

\providecommand{\keywords}[1]
{
  \small	
  \textbf{\textit{Keywords---}} #1
}

\title{Relating Eulerian and Lagrangian spatial models for vector-host disease dynamics through a fundamental matrix}
\author{Esteban Vargas Bernal$^{1}$\footnote{Corresponding author. E-mail: vargasbernal.1@osu.edu.}, Omar Saucedo$^{2}$, Joseph Hua Tien$^{1}$  \\
        \small $^{1}$Department of Mathematics, The Ohio State University. \\
        \small $^{2}$Department of Mathematics, Virginia Tech. \\
}

\newtheorem{definition}{Definition}
\newtheorem{proposition}{Proposition}
\newtheorem*{proposition*}{Proposition}

\begin{document}
\maketitle

\begin{abstract}
\noindent  We explore the relationship between Eulerian and Lagrangian approaches for modeling movement in vector-borne diseases for discrete space. In the Eulerian approach we account for the movement of hosts explicitly through movement rates captured by a graph Laplacian matrix $L$. In the Lagrangian approach we only account for the proportion of time that individuals spend in foreign patches through a mixing matrix $P$. We establish a relationship between an Eulerian model and a Lagrangian model for the hosts in terms of the matrices $L$ and $P$. 
We say that the two modeling frameworks are consistent if for a given matrix $P$, the matrix $L$ can be chosen so that the residence times of the matrix $P$ and the matrix $L$ match.
We find a sufficient condition for consistency, and examine disease quantities such as the final outbreak size and basic reproduction number in both the consistent and inconsistent cases.
In the special case of a two-patch model, we observe how similar values for the basic reproduction number and final outbreak size can occur even in the inconsistent case. However, there are scenarios where the final sizes in both approaches can significantly differ by means of the relationship we propose. 

\end{abstract}

\begin{center}
\keywords{Vector-borne disease, Movement, Spatial models, Lagrangian approach,\\ Eulerian approach, Fundamental matrix.} 
\end{center}

\section{Introduction}

Understanding how connectivity between different spatial locations affects vector-borne disease dynamics is a fundamental issue in disease ecology and public health. In particular, wide variation in disease transmission between locales is commonplace, reflecting heterogeneity in breeding site availability, geography, climate, availability of bed nets and window screens, demography, and many other factors.  How this spatial heterogeneity interacts with connectivity through host and vector movement to inform disease dynamics is not obvious. For example, empirical studies have shown how a vector-borne disease may persist in cities where mosquito abundance is low or zero (for example for malaria in \cite{ochoa2006epidemiology}). Some authors have shown that this persistence can be explained by host movement.  Stoddard et al. \cite{stoddard2009role} used a conceptual model to show that even when the vector density is low, the risk of acquiring the disease may be high due to movement. Similarly, Cosner et al. \cite{cosner2009effects} constructed a spatial vector-borne disease model to study how human movement can affect transmission, and discovered that human movement between heterogeneous locations was sufficient to sustain disease persistence. In other situations, movement can lead to disease extinction, even in areas with high local transmission \cite{Tatem2010PopMove}. Movement patterns also affect the spatial spread of vector-borne diseases such as Lyme disease \cite{gaff2007modeling}, West Nile Virus \cite{liu2006modeling}, Dengue \cite{Guido2019Dengue}, Zika \cite{OReilly2018Zika, Zhang2017Zika}, and Malaria \cite{Tatem2010PopMove,Wesolowski2012MalariaMove}.  Both connectivity and local conditions for disease transmission 
are important considerations when designing disease surveillance and control efforts \cite{woolhouse1997heterogeneities}.

Many different approaches have been taken for modeling spatial vector-borne disease dynamics, including 
PDEs \cite{Lewis2006Traveling}, ODEs \cite{Acevedo2015Spatial,Pindolia2012HumanMove, ruktanonchai2016identifying}, stochastic models \cite{Jovanovic2013StochasticVB, Wanduku2012}, and agent-based models \cite{Bomblies2014ABMMalaria, Jindal2017ABMVB}. A widely-used building block for modeling mosquito-borne disease is the Ross-MacDonald model \cite{reiner2013systematic,ross1916application}.  See \cite{smith2004statics} for derivation of the Ross-Macdonald system that is considered here. This paper concerns extensions of the Ross-MacDonald framework to include multiple discrete spatial locations. These locations might correspond to villages, cities, health districts, or the like, with linkages between them through movement of host and vector.  Note that the connectivity  patterns for host and vector may be different, for example reflecting different movement scales of each. Regarding mosquito-borne diseases, Service \cite{service1997mosquito} provides a review of the types of mosquitoes movement (long/short dispersal), which may vary significantly among different species. 
Empirical studies using different capture methods (bed net catches, exit trap catches, oviposition traps) allow us to have an idea of the spatial scale of mosquito movement \cite{honorio2003dispersal,thomson1995movement}.
Given the importance that host and vector movement 
may have for the spread of a vector-borne disease, in this paper we consider both movements.

We now encounter a dichotomy in the modeling approaches: whether to treat individuals (host or vector) as residents of a particular patch and commuting to the others, versus migration between patches without a fixed sense of home.  Following the terminology of \cite{cosner2015models, cosner2009effects}, we will refer to the former approach as {\it Lagrangian}, and the latter as {\it Eulerian}.  This terminology stems from similarities to the Lagrangian and Eulerian modeling approaches for the description of fluid motion in fluid mechanics \cite{krause2005fluid}. 
For more on Eulerian and Lagrangian approaches for modeling movement, see \cite{grunbaum1994,gueron1996}.
Each of these approaches has its strengths. Lagrangian models are often a natural choice on small spatial scales and in settings where individuals have a sense of home.  Eulerian models may be suitable on large spatial scales, and in migratory settings where the origin of the individuals is less important than their current location.  This may include situations where we are interested in introduction, reintroduction, or global spread of disease, as discussed in \cite{stoddard2009role}.  

Eulerian and Lagrangian approaches are widely-used to model vector-borne disease dynamics on discrete space. See \cite{dye1986population, hasibeder1988population,rodriguez2001models,ruktanonchai2016identifying} for some studies that have used the Lagrangian approach. For example, in \cite{dye1986population,hasibeder1988population}, each study used a Lagrangian framework to examine how heterogeneity in the distribution of \emph{Anopheles} mosquitoes could lead to a larger  reproduction number compared to the setting when the mixing is homogeneous.
See also 
\cite{allen2007asymptotic,gaff2007modeling,hsieh2007impact,liu2006modeling} for applications of the Eulerian approach. For example, in \cite{liu2006modeling} an Eulerian model was used to study how the long range dispersal of birds may explain discontinuities in the spread of West Nile Virus. 
Combinations of the Eulerian and Lagrangian approaches are taken in \cite{arino2005multi,arino2003multi, arino2006DiseaseMeta,iggidr2017vector}. 

What are the functional implications of using one approach versus the other?  Do the Eulerian and Lagrangian approaches yield similar results regarding important disease quantities such as the basic reproduction number and outbreak size?  Eulerian models tend to be analytically tractable, for example allowing establishment of asymptotic disease profiles \cite{allen2007asymptotic}, global stability \cite{shuai2013}, and application of techniques from spectral graph theory to estimate the basic reproduction number \cite{tien2015disease}. 
We would like to know whether one can safely use an Eulerian approach to model vector-borne disease in a setting where Lagrangian data are available (e.g., perhaps in terms of proportions of time spent in different locations) or where spatial scales are small and commuting (vs. migration) is typical.  Comparing the Eulerian and Lagrangian approaches to modeling movement, and the functional implications of using one approach versus the other in vector-borne disease models, is the focus of this paper.

Consider vector-host disease dynamics on $n$ spatial locations.
We define two families of models for this setting, where both use an Eulerian framework for modeling vector movement. In the first model we consider a Lagrangian approach for host movement, and we refer to this as the \textit{Lagrangian model} \cite{cosner2009effects,martcheva2015introduction}. In the second model we consider an Eulerian framework for host movement, and we refer to this model as the \textit{Eulerian model} \cite{cosner2009effects,martcheva2015introduction}. The Lagrangian character of the first model is captured by a mixing matrix $P = (p_{ij})_{i,j\leq n}$, where $p_{ij}$ is the fraction of time that a resident of patch $j$ spends in patch $i$. The Eulerian character of the second model is captured by an adjacency matrix $A = (m_{ij})_{i,j\leq n}$, with $m_{ij}$ being the per capita movement rate from patch $j$ to patch $i$. 
We will work extensively with the unnormalized  (`combinatorial') graph Laplacian $L=W-A$, where $W$ diagonal with $ W_{ii}= \sum_{k=1}^n A_{ki}$.
The graph Laplacian $L$ is a basic object in graph theory that conveys a great deal of structural information about the network associated with $A$, including the number of connected components, spanning trees, community structure, and more \cite{chung1997spectral, ng2002spectral,von2007tutorial}.

The main goal of this paper is studying the relationship between the Lagrangian and Eulerian models. 
Specifically, we obtain a relationship between the matrices $L$ and $P$ through a fundamental matrix that captures the expected time that an individual from one location spends in another. We give criteria for when the Eulerian and Lagrangian frameworks are consistent, meaning that the two frameworks can exactly match in terms of this fundamental matrix, and consider the functional implications for disease dynamics in both the consistent and inconsistent settings.  These results can serve as a guide for when one framework can be substituted for the other.

The following is the distribution of the content of this paper. In Section \ref{smodel} we define the Lagrangian and Eulerian systems that we study, and we give the disease-free equilibria and the basic reproduction numbers $\mathcal{R}_0^{\lag}$, $\mathcal{R}_0^{\eul}$ for both systems. In Section \ref{SectionPL} we relate the Lagrangian and Eulerian systems through a minimization problem involving the matrices $P$ and $L$.  We say that the two systems are {\it consistent} if this minimization problem can be solved exactly.
Following some preliminaries in Section \ref{sec:prelims}, we formulate this minimization problem in Section \ref{sec:formulation}. In Section \ref{sec:consistency} we look at the consistent scenario. We first provide a sufficient condition for the consistency of the proposed relationship in Section \ref{Ssufficientcondition}, and then, we give an example of a consistent relationship in Section \ref{Sconsistentexample}. In Section \ref{sectioninconsistent}
we give examples where the relationship is inconsistent. In Section \ref{section2by2} we study the relationship between the Lagrangian and Eulerian approaches for a simple network consisting of two patches. In Section \ref{sec:finalsize2patches} we compare the final outbreak sizes
and basic reproduction numbers of both systems under an inconsistent example. In Section \ref{sec:comparisonhomo} we compare the basic reproduction numbers of consistent examples when we vary the entries of the mixing matrix.  In Section \ref{SdataP} we explore some examples of matrices $P$ from empirical and hypothetical data. In Section \ref{sec:conclusions} we present the main conclusions of the paper. Finally, we give the details of some results of the previous sections in an appendix in Section \ref{Sappendix}.

\section{Modeling frameworks} \label{smodel}

The basic building block for the modeling frameworks considered in this paper is the Ross-MacDonald vector-host model, as considered by \cite{smith2004statics}.  For a single spatial location, the model equations are:
\begin{equation}
\label{eqn:rossmac}
\begin{array}{ccl}
\dot{S} &=& \Lambda - \beta \frac{S}{N} I_v - \mu S\\
\dot{I} &=& \beta \frac{S}{N}I_v - (\gamma + \mu) I\\
\dot{R} &=& \gamma I - \mu R \\
\dot{S}_v &=& \Lambda_v - \beta_v \frac{I}{N} S_v - \mu_v S_v \\
\dot{I}_v &=&  \beta_v \frac{I}{N} S_v - \mu_v I_v \,.
\end{array}
\end{equation}

System \eqref{eqn:rossmac} takes a Susceptible-Infectious-Recovered (SIR) framework for host and Susceptible-Infectious (SI) framework for vector.  $S$ represents the number of susceptible hosts, $I$ represents the number of infectious hosts, and $R$ represents the number of recovered hosts.  Similarly, $S_{v}$ denotes the number of susceptible vectors and  $I_{v}$ denotes the number of infected vectors.  The total host population is denoted by $N$. $\Lambda$ and $\Lambda_v$ are the constant recruitment rates for host and vector, respectively. The transmission rate from vector to host is $\beta_{i}$ and the transmission rate from host to vector is $\beta_{v,i}$.  The parameters $\mu$ and $\mu_{v}$ correspond to the mortality rates for host and vectors.  Many modifications to this framework are possible, including incubation periods, seasonal forcing, and much more (see \cite{childs2009interaction} for an example of seasonal forcing and \cite{smith2012ross} for a more general review).  We consider here the very simple system \eqref{eqn:rossmac} in order to focus on the impact of connectivity between different spatial locations.

Consider $n$ distinct spatial locations, each with local Ross-MacDonald dynamics as in \eqref{eqn:rossmac} with patch-specific parameters.  We will consider two modeling frameworks that differ in how the spatial locations are coupled through host movement: one using a Lagrangian approach, and the other an Eulerian approach \cite{cosner2009effects,martcheva2015introduction}. 

In the Lagrangian approach, coupling is a {\it mixing matrix} $P = (p_{ij})_{i,j\leq n}$, where $p_{ij}$ is the proportion of time a resident of patch $j$ spends in patch $i$. By contrast, 
in the Eulerian approach, coupling is via the adjacency matrices $M^{\xx} = (m_{ij}^{\xx})_{i,j\leq n}$ of weighted, directed graphs, where $m_{ij}^{\xx}$ is the per capita migration rate of hosts in state $\xx \in \{\Su,\In,\re\}$.

In both modeling frameworks, vector movement is modeled using an Eulerian framework, with $M^v$ the weighted adjacency matrix for vector movement.  Thus the two frameworks considered are Lagrangian (host) / Eulerian (vector), and Eulerian (host) / Eulerian (vector).  For brevity we will refer to these frameworks as simply Lagrangian in the former, and Eulerian in the latter.

In the ensuing analysis we will make extensive use of the (unnormalized) graph Laplacian $L$ \cite{aggarwal2014data}.  Let $A$ be the adjacency matrix for a (weighted, directed) graph $G$, with $A_{ij}$ the weight of the edge from $j$ to $i$.  Let $W$ be the diagonal out-degree matrix with $W_{ii} = \sum_{i=1}^n A_{ij}$.  Then the graph Laplacian is defined as
\begin{equation}
\label{eqn:laplacian}
L = W-A \, .
\end{equation}

The graph Laplacian is a fundamental quantity in graph theory that conveys structural information about $G$, including the number of connected components, spanning trees, community structure, and more  \cite{aggarwal2014data, chung1997spectral, ng2002spectral}.  In the context of infectious disease dynamics, the Laplacian arises naturally in the calculation of $\R_0$ for migration models \cite{tien2015disease}.

All model parameters throughout are assumed to be non-negative.  The adjacency matrices $M^{\xx}$, for $\xx \in {\{\Su,\In,\re\}}$, and $M^v$ are assumed to have zero diagonal (the corresponding graphs for Eulerian movement have no self-edges).  By definition, the columns of the mixing matrix $P$ sum to one.

\subsection{Lagrangian for host, Eulerian for vector}\label{sect:lag}

The Lagrangian approach can be viewed as a multigroup model with groups corresponding to hosts that are residents of the different spatial locations. The number of susceptible, infectious and recovered individuals that are residents of patch $i$ are denoted by $S_i$, $I_i$ and $R_i$ respectively, and the number of susceptible and infectious vectors in patch $i$ are denoted by $S_{v,i}$ and $I_{v,i}$ respectively. The proportion $p_{ij}$ is the ratio between the time spent by a resident of $j$ in patch $i$ to the whole time spent by a resident of $j$ in all the visited patches.
All of these proportions are collected by the mixing matrix $P=(p_{ij})_{i,j\leq n}$. The movement rate of a vector from patch $j$ to patch $i$ is given by $m_{ij}^v$, and all these rates are collected in the adjacency matrix $M^{v} = (m_{ij}^v)_{i,j\leq n}$. The transmission rates $\beta_i$ (from vectors to hosts) and $\beta_{v,i}$ (from hosts to vectors) are intrinsic to the corresponding patch and determine the transmission rates in (\ref{emodel1}) by averaging according to the mixing rates. Namely, the transmission rate for host residents of patch $i$ is $\sum_{j=1}^n \beta_j p_{ji}\left(S_i/N_i\right)I_{v,j}$ and the transmission rate for vectors in patch $i$ is $\beta_{v,i} \left[\left( \sum_{j=1}^n p_{ij}I_j\right) \middle/\left(\sum_{j=1}^np_{ij}N_j\right)\right] S_{v,i}$. The transmission rate $\sum_{j=1}^n \beta_j p_{ji}\left(S_i/N_i\right)I_{v,j}$ has been considered in Lagrangian models such as in \cite{cosner2009effects,dye1986population,rodriguez2001models,ruktanonchai2016identifying}. There exist other models that incorporate more complex transmission rates for the host. For example, \cite{bichara2016vector} considered $\sum_{k=1}^np_{jk}N_k$ instead of $N_i$ in the denominator of the transmission rate for host individuals under other assumptions. However, the transmission rate considered here is justified by its inclusion in other studies and its analytical tractability.  
The transmission rate $\beta_{v,i} \left[\left( \sum_{j=1}^n p_{ij}I_j\right) \middle/\left(\sum_{j=1}^np_{ij}N_j\right)\right] S_{v,i}$ for vectors has been adopted in studies such as \cite{ruktanonchai2016identifying}. Moreover, the whole Lagrangian system that we consider was also studied in \cite{ruktanonchai2016identifying}. 

The recruitment, mortality and recovery rates for host residents of patch $i$ are $\Lambda_i^{\lag}, \mu_i^{\lag},$ and $\gamma_i^{\lag}$, respectively. The recruitment and mortality rates for vectors in patch $i$ are $\Lambda_{v,i}$ and $\mu_{v,i},$ respectively. 
Let $\delta_i^{\lag}= \mu_i^{\lag}+\gamma_i^{\lag}$ and $\delta_{v,i} = \mu_{v,i}$ denote the host and vector removal rates, respectively. Equations for the Lagrangian framework are given in system \eqref{emodel1}:
\begin{equation}\label{emodel1}
\begin{array}{ccl}
\dot{S_i}& = & \Lambda_i^{\lag} - \sum_{j=1}^n \beta_j p_{ji}\frac{S_i}{N_i}I_{v,j} - \mu_i^{\lag} S_i \\
\dot{I_i}& = &\sum_{j=1}^n \beta_j p_{ji}\frac{S_i}{N_i}I_{v,j} - \left(\gamma_i^{\lag} + \mu_i^{\lag}\right)I_i \\
\dot{R_i}& = &\gamma_i^{\lag} I_i - \mu_i^{\lag} R_i \\
\dot{S}_{v,i}& = &\Lambda_{v,i} -\beta_{v,i} \frac{\sum_{j=1}^n p_{ij}I_j}{\sum_{j=1}^np_{ij}N_j} S_{v,i} + \sum_{j=1}^n m_{ij}^v S_{v,j} - \sum_{j=1}^n m_{ji}^v S_{v,i} - \mu_{v,i} S_{v,i} \\
\dot{I}_{v,i}& = &\beta_{v,i} \frac{\sum_{j=1}^n p_{ij}I_j}{\sum_{j=1}^np_{ij}N_j} S_{v,i} + \sum_{j=1}^n m_{ij}^v I_{v,j} - \sum_{j=1}^n m_{ji}^v I_{v,i} - \mu_{v,i} I_{v,i}\, ,
\end{array}
\end{equation}
for $i=1,\dots,n$.

We assume the following throughout:
\begin{list}{{\bf A\arabic{acounter}:~}}{\usecounter{acounter}}
    \item The adjacency matrix for vector movement has zeros on the diagonal (i.e. $m^v_{ii} = 0$ for all $i$).
        \label{a:M}
    \item The parameters $\beta_i, \beta_{v,i}$ are non-negative.
        \label{a:pos}
    \item The parameters $\Lambda_i^{\lag},\Lambda_{v,i},\gamma_i^{\lag}$, $\mu_i^{\lag}$ and $\mu_{v,i}$ are all positive.
        \label{a:nonneg}
    \item The mixing matrix $P$ is non-negative, with $\sum_{i=1}^n p_{ij} = 1$ for all $j$.
        \label{a:P}    
\end{list}

Table \ref{param1}  contains the parameters used for system (\ref{emodel1}).

\begin{table}[H]
\caption{Parameters for systems  (\ref{emodel1}) and (\ref{emodel2}). 
}
\label{param1}
\begin{tabular}{lll}
\hline\noalign{\smallskip}
Parameter & Meaning & Units  \\
\noalign{\smallskip}\hline\noalign{\smallskip}
$\Lambda_i^{\lag}, \Lambda_i^{\eul}$ & Recruitment rate of susceptible host in patch $i$ & Hosts $\times$ Days$^{-1}$ \\
$\Lambda_{v,i}$ & Recruitment rate of susceptible vectors in patch $i$ & Vectors $\times$ Days$^{-1}$ \\
$\gamma_i^{\eul}$ & Per capita recovery rate of hosts in patch $i$. & Days$^{-1}$  \\
$\mu_i^{\eul}$ & Per capita mortality rate of hosts in patch $i$ & Days$^{-1}$ \\
$\delta_i^{\eul}$ & Per capita removal rate of infectious hosts in patch $i$ & Days$^{-1}$  \\
$\gamma_i^{\lag}$ & Per capita recovery rate of hosts from patch $i$ & Days$^{-1}$  \\
$\mu_i^{\lag}$ & Per capita mortality rate of hosts from patch $i$ & Days$^{-1}$ \\
$\delta_i^{\lag}$ & Per capita removal rate of infectious hosts from patch $i$ & Days$^{-1}$   \\
$p_{ji}$ & Proportion of time that a host from patch $i$ spends in patch $j$ & Dimensionless  \\
$\beta_i $  & Transmission rate to hosts per vector in patch $i$ & Hosts  $\times$ Days$^{-1}\times$ Vectors$^{-1}$  \\
$\beta_{v,i} $ & Per capita transmission rate to vectors in patch $i$ & Days$^{-1}$  \\
$m_{ji}^{\xx}$ & Per capita movement rate of hosts in state $\xx$  & Days$^{-1}$   \\
& from patch $i$ to $j$, for $\xx\in\{\Su,\In,\re\}$ & \\
$m_{ji}^v$ & Per capita movement rate of vectors from patch $i$ to $j$&  Days$^{-1}$   \\

\noalign{\smallskip}\hline
\end{tabular}
\end{table}

Let $G_v := L_v + D_{\delta_v}$, where $L_v$ is the graph Laplacian corresponding to the  adjacency matrix $M^v$ that captures the vector movement, $D_{\delta_v}:= \diag\left\{\delta_{v,i}\right\}$ and $D_\mu^{\lag} := \diag\left \{\mu_i^{\lag}\right\}$. 
Note that $G_v$ has the Z-sign pattern \cite{berman1994nonnegative}, and under assumption A\ref{a:pos} has positive column sums. Thus $G_v$ is a non-singular $M$-matrix \cite{berman1994nonnegative}. 
The disease-free equilibrium (DFE) of the susceptible compartments of  model (\ref{emodel1}) is then

\begin{equation}\label{eqn:DFElagrangian}
\begin{array}{rcl}
\left(S^{\lag}\right)^* = \left(N^{\lag}\right)^* &=&
\left(D_\mu^{\lag}\right)^{-1}\Lambda^{\lag} \, ,\\ \left(S_v^{\lag}\right)^*=\left(N_v^{\lag}\right)^* &=& G_v^{-1}\Lambda_v \, ,
\end{array}
\end{equation}
where $\left(S^{\lag}\right)^*, \left(N^{\lag}\right)^*$, $\left(S_v^{\lag}\right)^*, \left(N_v^{\lag}\right)^*$, $\Lambda_v $ and $\Lambda^{\lag}$ are column vectors with components
$\left(S_{i}^{\lag}\right)^*, \left(N_{i}^{\lag}\right)^*$, $\left(S_{v,i}^{\lag}\right)^*, \left(N_{v,i}^{\lag}\right)^*$, $\Lambda_{v,i}$  and  $\Lambda_{i}^{\lag},$ respectively. (The superscript $*$ indicates evaluation at the DFE.) 

Consider the basic reproduction number $\R_0^{\lag}$ for system \eqref{emodel1}, computed using the next generation matrix approach \cite{van2002reproduction}.  
Then $\left(\R_0^{\lag}\right)^2 = \rho\left(\left(F^{\lag}\right)\left(V^{\lag}\right)^{-1}\right)$, where $\rho$ denotes the spectral radius, and
$F^{\lag}$ and $V^{\lag}$ denote the fecundity and transfer matrices for system \eqref{emodel1}.  Let  
$D_{\beta}:= \diag\{\beta_i\}$, $\hat{N}^*:= P(N^{\lag})^*$, $D_{\beta_v}^{\lag} := \diag \left\{ \beta_{v,i} \left(N_{v,i}^{\lag}\right)^*/\hat{N_i}^* \right\}$ and
 $D_\delta^{\lag}:= \diag\left\{\delta_i^{\lag}\right\}$. 
 The resulting fecundity and transfer matrices are %

\begin{equation}
\label{eqn:Flag}
F^{\lag} = \begin{pmatrix}
0 & P^{\T} D_{\beta} \\
D_{\beta_v}^{\lag}P & 0
\end{pmatrix}     
\end{equation}
and
\begin{equation}
\label{eqn:Vlag}
V^{\lag} = \begin{pmatrix}
D_\delta^{\lag} & 0 \\
0 & G_v
\end{pmatrix}\,,
\end{equation}
with basic reproduction number
\begin{equation}\label{er01}
\left(\mathcal{R}_0^{\lag}\right)^2 = \rho\left(\left(F^{\lag}\right)\left(V^{\lag}\right)^{-1}\right) = \rho\left(P^{\T}D_{\beta}G_v^{-1}D_{\beta_v}^{\lag}P \left(D_\delta^{\lag}\right)^{-1}\right)\, . 
\end{equation}
Details are provided in Appendix \ref{appendixr0}. The next generation matrix is the product of two terms: $P^{\T}D_\beta G_v^{-1}$ corresponding to secondary host infections created by infectious vectors, and $D_{\beta_v}^{\lag} P \left(D_\delta^{\lag}\right)^{-1}$ corresponding to secondary vector infections created by infectious hosts.  Note that for the Lagrangian model, host movement influences the next generation matrix via the mixing matrix $P$ appearing in the fecundity matrix $F^{\lag}$.

\subsection{Eulerian for host, Eulerian for vector}\label{sect:eul}

Consider now the setting where spatial locations are coupled via migration of both host and vector. The abundances of susceptible, infectious and recovered hosts in patch $i$  are $S_i$, $I_i$ and $R_i$, and the number of susceptible and infectious vectors are $S_{v,i}$ and $I_{v,i}$ respectively. The per capita movement rate of hosts in state $\xx$ from patch $j$ to patch $i$ is $m_{ij}^{\xx}$, for $\xx\in\{\Su,\In,\re\}$, and these rates are recorded in the host movement adjacency matrix $M^{\xx}=\left(m_{ij}^{\xx}\right)_{i,j\leq n}$. The assumption that the movement rate between two patches depends on the state $\xx\in\{\Su,\In,\re\}$ has been considered in studies such as \cite{hsieh2007impact}.  Similarly, the per capita movement rate of vectors from patch $j$ to patch $i$ is $m_{ij}^v$ and these rates are collected in the vector movement adjacency matrix $M^v=\left(m_{ij}^v\right)_{i,j\leq n}$. As before, the host and vector transmission rates for patch $i$ are $\beta_i$ and $\beta_{v,i}$. 
We treat these rates as intrinsic to the patch, and thus take them to be the same as in model (\ref{emodel1}). The recruitment, mortality and recovery rate for hosts in patch $i$ are $\Lambda_i^{\eul}, \mu_i^{\eul} $ and $\gamma_i^{\eul}$ respectively. The recruitment and mortality rate for vectors in patch $i$ are $\Lambda_{v,i}$ and $\mu_{v,i}$ respectively, taken as the same as in model (\ref{emodel1}). We also define the removal rates $\delta_i^{\eul}= \mu_i^{\eul}+\gamma_i^{\eul}$ and $\delta_{v,i} = \mu_{v,i}$.  This comprises a modeling framework where an Eulerian approach is used to model both host and vector movement.  
The corresponding equations are shown in \eqref{emodel2}:

\begin{equation}\label{emodel2}
\begin{array}{ccl}
\dot{S_i} & = & \Lambda_i^{\eul} - \beta_i \frac{S_i}{N_i}I_{v,i} + \sum_{j=1}^n m_{ij}^{\Su} S_{j} - \sum_{j=1}^n m_{ji}^{\Su} S_{i} - \mu_i^{\eul} S_i \\
\dot{I_i} & = & \beta_i \frac{S_i}{N_i}I_{v,i} + \sum_{j=1}^n m_{ij}^{\In} I_{j} - \sum_{j=1}^n m_{ji}^{\In} I_{i}  - \left(\gamma_i^{\eul} + \mu_i^{\eul}\right)I_i \\
\dot{R_i} & = & \gamma_i^{\eul} I_i  + \sum_{j=1}^n m_{ij}^{\re} R_{j} - \sum_{j=1}^n m_{ji}^{\re} R_{i} - \mu_i^{\eul} R_i \\
\dot{S}_{v,i} & = & \Lambda_{v,i} -\beta_{v,i} \frac{I_i}{N_i} S_{v,i} + \sum_{j=1}^n m_{ij}^v S_{v,j} - \sum_{j=1}^n m_{ji}^v S_{v,i} - \mu_{v,i} S_{v,i} \\
\dot{I}_{v,i} & = & \beta_{v,i} \frac{I_i}{N_i} S_{v,i} + \sum_{j=1}^n m_{ij}^v I_{v,j} - \sum_{j=1}^n m_{ji}^v I_{v,i} - \mu_{v,i} I_{v,i}\,,
\end{array}
\end{equation}
for $i=1,\dots,n$.

A summary of the parameters of system (\ref{emodel2}) is given in Table \ref{param1}.  
We assume A\ref{a:M}-A\ref{a:pos} hold for system \eqref{emodel2}.  In addition, we assume A\ref{a:m_host}-A\ref{a:pos2}:

\begin{list}{{\bf A\arabic{acounter}:~}}{\usecounter{acounter}}{\setcounter{acounter}{4}}
    \item The adjacency matrix for host movement has zeros on the diagonal (i.e. $m^{\xx}_{ii} = 0$ for all $i$ and $\xx\in\{\Su,\In,\re\}$).
    \label{a:m_host}
    \item The parameters $\beta_i, \beta_{v,i}, \Lambda_{v,i}, \mu_{v,i}$ are intrinsic to the patch $i$, so they are considered to be the same as in system (\ref{emodel1}).
    \label{a:samepars}
    \item The parameters $\Lambda_i^{\eul},\Lambda_{v,i},\gamma_i^{\eul}$, $\mu_i^{\eul}$ and $\mu_{v,i}$ are all positive.
        \label{a:pos2}
\end{list}

Note that in system \eqref{emodel2}, individuals assume the characteristics of the patch they are currently located in.  For example, a host individual that migrates from $j$ to $i$ now recovers from infection at rate $\gamma_i^{\eul}$.  
The force of infection for patch $i$ in system \eqref{emodel2} depends only upon population abundances in patch $i$ (in contrast to the Lagrangian system \eqref{emodel1}, where the force of infection in $i$ involves contributions from other patches weighted by the mixing matrix $P$).

Let $L^{\xx},L_v$ denote the graph Laplacians corresponding to the host ($M^{\xx}$, for $\xx\in\{\Su,\In,\re\}$) and vector ($M^v$) movement matrices, respectively.  Let $G_v:= L_v+D_{\delta_v}$ as in Section \ref{sect:lag}, and let $D_\mu^{\eul} = \diag\left\{\mu_i^{\eul}\right\}$.  Then at the DFE, the susceptible compartments of  system (\ref{emodel2}) are given by
\begin{equation}\label{eqn:DFEeulerian}
\begin{array}{rcl}
\left(S^{\eul}\right)^*=\left(N^{\eul}\right)^* &=&
\left(L^{\Su} + D_\mu^{\eul}\right)^{-1}\Lambda^{\eul} \, , \\ \left(S_v^{\eul}\right)^*=\left(N_v^{\eul}\right)^* &=& G_v^{-1}\Lambda_v \, ,
\end{array}
\end{equation}
where $\left(S^{\eul}\right)^*, \left(N^{\eul}\right)^*,\left(S_v^{\eul}\right)^*, \left(N_v^{\eul}\right)^*, \Lambda^{\eul}$ and $\Lambda_v $ are column vectors as before. Notice that $\left(N_v^{\eul}\right)^* = \left(N_v^{\lag}\right)^* = G_v^{-1}\Lambda_v$, so we define the vector $N_v^*:=\left(N_v^{\eul}\right)^* = \left(N_v^{\lag}\right)^*$ with entries $N_{v,i}^*= \left(N_{v,i}^{\eul}\right)^* = \left(N_{v,i}^{\lag}\right)^*$.

Consider the basic reproduction number $\R_0^{\eul} $ for system \eqref{emodel2}, computed using the next generation matrix approach.
Define   $D_{\beta}:= \diag\{\beta_i\}$ and $D_{\beta_v}^{\eul} := \diag\left\{\beta_{v,i} \left(N_{v,i}^{\eul}\right)^*/\left(N_i^{\eul}\right)^* \right\}$.  Let $D_\delta^{\eul} := \diag \lbrace \delta_i^{\eul} \rbrace$, and $G := L^{\In} + D_\delta^{\eul}.$  Note that assumption A\ref{a:pos} implies that $G$ is a non-singular $M$-matrix. We show in Appendix \ref{appendixr0} that if 
\begin{equation}
    \label{eqn:Feul}
      F^{\eul} = \begin{pmatrix}
    0 & D_\beta \\
    D_{\beta_v}^{\eul} & 0
    \end{pmatrix}
\end{equation}
and
\begin{equation}
    \label{eqn:Veul}
    V^{\eul} = \begin{pmatrix}
    G & 0\\
    0 & G_v \end{pmatrix}\,,
    \end{equation}
then
\begin{equation}\label{er02}
\left(\mathcal{R}_0^{\eul}\right)^2=\rho\left(\left(F^{\eul}\right)\left(V^{\eul}\right)^{-1}\right) = \rho\left(D_\beta G_v^{-1}D_{\beta_v}^{\eul}G^{-1}\right)\,.
\end{equation}

As for the Lagrangian model, the next generation matrix is the product of two terms, one ($D_\beta G_v^{-1}$) corresponding to secondary host infections created by infectious vectors, and the other ($D_{\beta_v}^{\eul} G^{-1}$) corresponding to secondary vector infections created by infectious hosts.  Note that for the Eulerian model, host movement affects the next generation matrix via the transfer matrix $V^{\eul}$.  This is in contrast with the Lagrangian model, where host movement appears in the fecundity matrix $F^{\lag}$. Additionally, in the Lagrangian model the host transmission rates in the next generation matrix are scaled by host movement [i.e. the $P^{\T}D_\beta G_v^{-1}$ term in \eqref{er01}], whereas in the Eulerian approach they are not [i.e. the $D_\beta G_v^{-1}$ term in \eqref{er02}].  As we will see in the next sections, these differences lead to different values of $\mathcal{R}_0^{\lag}$ and $\mathcal{R}_0^{\eul}$ (in particular, see the end of Appendix \ref{appendixhomogeneous} for intuition on the difference between $\mathcal{R}_0^{\lag}$ and $\mathcal{R}_0^{\eul}$ in a two-patch example).

\section{Model comparison through a fundamental matrix} \label{SectionPL}

\subsection{Preliminaries}
\label{sec:prelims}

Our objective is to compare the Lagrangian \eqref{emodel1} and Eulerian \eqref{emodel2} frameworks.  As pointed out by Cosner et al. (see Section 2.2.3 of \cite{cosner2009effects}), the frameworks are in general distinct if we try to relate the number of individuals of both systems. 
Specifically, Cosner et al. showed that the dynamical system resulting from modeling the number of individuals currently located in each patch under the Lagrangian framework does not correspond to an Eulerian model.
Here, we take an alternative approach: we 
compare systems (\ref{emodel1}) and (\ref{emodel2}) 
by tuning the host mobility matrix $M$ so  
that the expected amount of time spent in one location starting from another matches between the two frameworks as closely as possible.

We begin with the analysis of Cosner et al. \cite{cosner2009effects}, who considered when the equilibria between an Eulerian and Lagrangian model can be matched.  Let
$X$ denote a host population type in $\{S,I,N\}$. To match population sizes, we should have that the population size $X_i^{\eul}$ in any patch $i$ for the Eulerian model is equal to the total combined proportions of number of residents $X_j^{\lag}$ of any other patch $j$ that are in $i$ for the Lagrangian model, which is $\sum_j p_{ij}X_j^{\lag}$. In other words, if $X^{\eul}$ and $X^{\lag}$ are column vectors with entries $X_i^{\eul}$ and $X_i^{\lag}$ respectively, we should have:

\begin{equation}\label{xeulxlag}
X^{\eul} = P X^{\lag} \,.
\end{equation}

We will assume in the remainder of the paper that \eqref{xeulxlag} holds at the DFE.  Using 
$\left(N^{\eul}\right)^* = \left(L^{\Su}+D_\mu^{\eul}\right)^{-1} \Lambda^{\eul}$ and    $ \left(N^{\lag}\right)^* = \left(D_{\mu}^{\lag}\right)^{-1} \Lambda^{\lag} $, we set the following assumption: 

\begin{list}{{\bf A\arabic{acounter}:~}}{\usecounter{acounter}}{\setcounter{acounter}{7}}
    \item We assume that the DFEs of  systems (\ref{emodel1}) and (\ref{emodel2}) match in the sense of (\ref{xeulxlag}), so $\left(N^{\eul}\right)^*=P\left(N^{\lag}\right)^*$.
    Specifically, let 
    $\Lambda^{\eul}$ and $\Lambda^{\lag}$ be such that 
    \begin{equation}\label{conditionlambda}
    \left(L^{\Su}+D_\mu^{\eul}\right)^{-1} \Lambda^{\eul} = P \left(D_{\mu}^{\lag}\right)^{-1} \Lambda^{\lag}\,. \end{equation}
    \label{a:DFEs}
\end{list}

The condition in (\ref{conditionlambda}) states that at the DFE the amount of susceptible individual in patch $i$ (Eulerian equilibrium $\left(N_i^{\eul}\right)^*$) is the average (weighted by the row $i$ of $P$) over all different patches $j$ of the number of susceptible individuals that are residents of $j$ (Lagrangian equilibrium $\left(N_j^{\lag}\right)^*$). As pointed out by Cosner et al. \cite{cosner2009effects}, in general it may not be possible to satisfy equation \eqref{conditionlambda}. Namely, for a given $\Lambda^{\lag}$ we may find that $\left(L^{\Su}+D_\mu^{\eul}\right)P\left(D_\mu^{\lag}\right)^{-1} \Lambda^{\lag}$ has some negative entries, and since the equation $\Lambda^{eul} = \left(L^{\Su}+D_\mu^{\eul}\right)P\left(D_\mu^{\lag}\right)^{-1} \Lambda^{\lag}$ is equivalent to (\ref{conditionlambda}), then (\ref{conditionlambda}) would not be true for $\Lambda^{\eul}$ with positive entries. 
However, since $L^{\Su} + D_\mu^{\eul}$ is an $M$-matrix (see \cite{berman1994nonnegative}, page 137), then $L^{\Su}+D_\mu^{\eul}$ is semi-positive, i.e., there exists $x>0$ such that $\left(L^{\Su}+D_\mu^{\eul}\right)x>0$ (see \cite{berman1994nonnegative}, page 136). Therefore, equation (\ref{conditionlambda}) can be satisfied for appropriate $\Lambda^{\lag}, \Lambda^{\eul}$.

In addition, since $D_{\beta_v}^{\lag} := \diag\left\{\beta_{v,i} \left(N_{v,i}^{\lag}\right)^*/\hat{N_i}^* \right\}$, $D_{\beta_v}^{\eul} := \diag\left\{\beta_{v,i} \left(N_{v,i}^{\eul}\right)^*/\left(N_i^{\eul}\right)^* \right\}$, $\left(N_{v,i}^{\lag}\right)^*=\left(N_{v,i}^{\eul}\right)^*$ and  $ \hat{N}^* = P\left(N^{\lag}\right)^* = \left(N^{\eul}\right)^* $ (under (\ref{conditionlambda})), then
$$ D_{\beta_v}:= D_{\beta_v}^{\lag} =D_{\beta_v}^{\eul}\, . $$

Therefore, under A\ref{a:DFEs} we have 

\begin{align} \label{er022}
\left(\mathcal{R}_0^{\lag}\right)^2 & = \rho\left(P^{\T}D_{\beta}G_v^{-1}D_{\beta_v}P \left(D_\delta^{\lag}\right)^{-1}\right)\,, \notag \\ 
\left(\mathcal{R}_0^{\eul}\right)^2 & =\rho\left(D_\beta G_v^{-1}D_{\beta_v}G^{-1}\right) \, .
\end{align}

We can also compare the recovery and mortality rates of both models. 
In general, these parameters are distinct between the two modeling frameworks.  Consider, for example, the mortality rate for patch $i$ in each of the frameworks.  In the Eulerian model, $\mu_i^{\eul}$ reflects only the characteristics of location $i$.  By contrast, the mortality rate $\mu_i^{\lag}$ in the Lagrangian model reflects characteristics of all the spatial locations, weighted according to the proportion of time that a resident of $i$ spends in each location.  Thus,  
we model this relationship by averaging the rates $\mu_j^{\eul}$ and setting $\mu_i^{\lag} = \sum_{j} p_{ji}\mu_j^{\eul}$.  
These considerations lead us to the following assumption:

\begin{list}{{\bf A\arabic{acounter}:~}}{\usecounter{acounter}}{\setcounter{acounter}{8}}
    \item We assume that the parameters $\mu_i^{\lag}$, $\gamma_i^{\lag}$, $\delta_i^{\lag}$ are related to $\mu_i^{\eul}$, $\gamma_i^{\eul}$, $\delta_i^{\eul}$ by
    \begin{equation}\label{eqn:rel_abs}
    \begin{array}{rcl}
     \mu_i^{\lag} &=& \sum_{j} p_{ji}\mu_j^{\eul}\,, \\ 
     \gamma_i^{\lag} &=& \sum_{j} p_{ji}\gamma_j^{\eul}\,, \\ 
     \delta_i^{\lag} &=& \sum_{j} p_{ji}\delta_j^{\eul}\,.
\end{array}
     \end{equation}
     
    In matrix form, this means $D_\delta^{\lag} = \diag\left\{\mathbf{1}^{\T} D_\delta^{\eul}P\right\}$, where $\mathbf{1} = (1,1,\ldots,1)^{\T}$.
    \label{a:abs}
\end{list}

\subsection{Problem formulation} \label{sec:formulation}

We compare the Lagrangian and Eulerian modeling frameworks through a relationship that can be interpreted in terms of a fundamental matrix for a Markov process.  Systems \eqref{emodel1} and \eqref{emodel2} are deterministic, not stochastic.  However, aspects of both systems can be interpreted in terms of a fundamental matrix for a continuous time random walk \cite{dobrow2016introduction}.
We will use this fundamental matrix to relate the two systems.  
Consider the next generation matrix $F^{\eul}\left(V^{\eul}\right)^{-1}$ for system \eqref{emodel2} \cite{van2002reproduction}.   The transfer matrix $V^{\eul}$ includes a block $G = L^{\In}+D_\delta^{\eul}$ that generates an absorbing random walk on the host movement network.  $G^{-1}$ corresponds to the fundamental matrix of this random walk, with $(i,j)$ entry giving the expected time that an infectious individual starting in $j$ spends in $i$ before being absorbed (removed) from the system \cite{tavare1979note,van2002reproduction}.  This interpretation of the transfer matrix underlies intuition for $FV^{-1}$ as giving the number of infectious individuals in the `next generation', and corresponding threshold of $\rho\left(FV^{-1}\right) >1$ for disease invasion.  See \cite{van2002reproduction}.

Now consider the mixing matrix $P$ for the Lagrangian system \eqref{emodel1}.  The entries $p_{ij}$ give the probability that a resident of $j$ is in patch $i$, and $1/\delta_j^{\lag}$ gives the expected time that a resident of $j$ stays infectious, so $p_{ij}/\delta_j^{\lag}$ represents the expected time that an infectious host from patch $j$ spends in patch $i$ according to the Lagrangian approach. 
Matching the expected `residence times' (times spent in $i$, starting from $j$) for the Eulerian and Lagrangian frameworks and applying A\ref{a:abs}, we have:
    \begin{equation}\label{relationPL}
    \left(L^{\In}+D_\delta^{\eul}\right)^{-1} = P\left(D_\delta^{\lag}\right)^{-1} = (P) \left(\diag^{-1}\left\{\mathbf{1}^{\T} D_\delta^{\eul}P\right\}\right) \,.
    \end{equation}
    \label{a:fundamental}
    
As we will see in Section \ref{sectioninconsistent}, for a given $D_\delta^{\eul}$ and $P$ it may not always be possible to find a graph Laplacian matrix $L^{\In}$ such that (\ref{relationPL}) holds. This fact leads us to the following definition.

\begin{definition}{{\bf (Consistency)}}
Assume A\ref{a:M}-A\ref{a:abs}  and suppose that $P$ and $D_\delta^{\eul}$ are given. We say that systems \eqref{emodel1} and \eqref{emodel2} are {\bf consistent} if there exists a graph Laplacian matrix $L^{\In}$  satisfying (\ref{relationPL}), and {\bf inconsistent} if such a matrix does not exist. \label{def:consistency}
\end{definition}

Note that as $L^{\In}+D_\delta^{\eul}$ and $D_\delta^{\lag}$ are non-singular, if  systems \eqref{emodel1} and \eqref{emodel2} are consistent, 
then $P$ is non-singular.  However, as we will see in Section \ref{sectioninconsistent}, the converse is not necessarily true. We will assume in the following sections that $P$ is non-singular. 

\begin{list}{{\bf A\arabic{acounter}:~}}{\usecounter{acounter}}{\setcounter{acounter}{9}}
    \item $P$ is non-singular.
    \label{a:Pinvertible}
\end{list}

Notice that under A\ref{a:Pinvertible}, relationship (\ref{relationPL}) is equivalent to 
 
\begin{equation}
L^{\In} = D_\delta^{\lag}P^{-1}- D_\delta^{\eul}\,. 
\end{equation}
 
Let $p_{ij}'$ denote the $(i,j)$ entry of the inverse of $P$, that is, let $P^{-1} = (p_{ij}')_{i,j\leq n}$.
Then the $j^{th}$ column sum of $D_\delta^{\lag}P^{-1}- D_\delta^{\eul}$ is

\begin{equation}\label{eqn:sum_col_zero}
\left(\sum_i p_{ij}' \delta_i^{\lag}\right)-\delta_j^{\eul} = \left(\sum_i p_{ij}' \sum_k p_{ki}\delta_k^{\eul}\right)-\delta_j^{\eul} = \left(\sum_k  \delta_k^{\eul} \sum_{i}p_{ki}p_{ij}'\right)-\delta_j^{\eul}=0 \,.   
\end{equation}
 
The above condition is consistent with the graph Laplacian having zero column sums. However, we also want the off-diagonal elements of $D_\delta^{\lag}P^{-1}- D_\delta^{\eul}$ to be non-positive, which is not in general the case as we will show in the example of Section \ref{sectioninconsistent}.
In consequence, we are interested in  finding
\begin{equation}
\label{eqn:E_min}
E =\inf_{L \in \mathbb{R}^{n \times n}} \left\{ \norm{L - (D_\delta^{\lag}P^{-1}- D_\delta^{\eul})}_F : L \text{ is a graph Laplacian matrix }\right\},
\end{equation}
where we use the Frobenius norm defined by $\norm{B}_F = \sqrt{\tr(BB^{\T})}$ for a given matrix $B$. The Frobenius norm
allows us to treat \eqref{eqn:E_min} as a non-negative least squares problem.
For example, in the 2 by 2 case, we have that
\begin{equation}
    E = \inf_{m\geq 0}  \norm{L^* m - \bar{x}}_2 \,, 
\end{equation}
where 

$$L^* = \begin{pmatrix}
1 & 0\\
-1 & 0\\
0 & 1 \\
0 & -1
\end{pmatrix},\quad m = \begin{pmatrix}
m_{21}\\ m_{12}
\end{pmatrix}, \quad D_\delta^{\lag}P^{-1}- D_\delta^{\eul} = \begin{pmatrix}
a & c \\ b & d
\end{pmatrix}, \quad \bar{x} =(a, b, c, d)^{\T}\,. $$

Notice that $ \norm{m}_2 \leq \norm{L}_F = \norm{L^* m}_2  \leq \norm{L^* m -\bar{x}}_2 + \norm{\bar{x}}_2 $. Therefore, if $\{m_k\}_{k\geq 1}$ is a sequence such that $\lim_{k \rightarrow \infty} \norm{L^*m_k-\bar{x}}_2=E$, then $\limsup_{k\geq 1} \norm{m_k}_2 \leq E + \norm{\bar{x}}_2$. In consequence,  $\{\norm{m_k}_2\}_{k\geq 1}$ is bounded and therefore $\inf_{m\geq 0}  \norm{L^* m - \bar{x}}_2$ is attained, i.e., $E=\min_{m\geq 0}  \norm{L^* m - \bar{x}}_2.$

In general, $L^*$ is an $n^2 \times n(n-1)$ matrix, and from the Karush-Kuhn-Tucker  conditions the optimum $\bar{m}$ satisfies $ (L^* \bar{m} - \bar{x})^{\T} L^* \bar{m}=0, \bar{m} \geq 0$ (see Section 10.10 of \cite{byrne2014first}), giving 

\begin{equation}
\label{eqn:kkt}
\bar{x}^{\T}L^*\bar{m} = \norm{L^*\bar{m}}_2^2\geq 0, 
E^2 =(L^* \bar{m} - \bar{x})^{\T} (L^* \bar{m} - \bar{x})
= \norm{D_\delta^{\lag}P^{-1}- D_\delta^{\eul} }_F^2 - \bar{x}^{\T}L^*\bar{m}\,. 
\end{equation}

Hence we have the following upper bound for $E$:
\begin{equation}\label{boundconsistency}
 E  \leq \norm{ D_\delta^{\lag}P^{-1}- D_\delta^{\eul} }_F\,.
\end{equation}

\subsection{Consistency} \label{sec:consistency}

\subsubsection{Sufficient condition} \label{Ssufficientcondition}

By direct calculation we can show that the condition $p_{12}+p_{21} < 1$ guarantees that systems (\ref{emodel1}) and (\ref{emodel2}) are consistent in the two-patch setting. This suggests that the off-diagonal terms of the mixing matrix $P$ must be sufficiently small for the Lagrangian and Eulerian systems to be consistent. On the other hand, if 
\begin{equation}\label{eqn:examplesufficient}
 P=\begin{pmatrix}
  9/10  &  0  &  0 \\
    0 &   1  &  1/10 \\
    1/10  &  0 &   9/10
\end{pmatrix}, \text{ then }D_\delta^{\lag}P^{-1}-D_\delta^{\eul} = \begin{pmatrix} 10/9 \delta_1^{\lag} -\delta_1^{\eul} & 0 & 0 \\ 1/81\delta_2^{\lag} & \delta_2^{\lag} - \delta_2^{\eul} & -1/9\delta_2^{\lag} \\ -10/81\delta_3^{\lag} & 0 & 10/9\delta_3^{\lag} -\delta_3^{\eul} \end{pmatrix}.   
\end{equation}
From (\ref{eqn:examplesufficient}),  $D_\delta^{\lag}P^{-1}-D_\delta^{\eul}$ has positive off-diagonal entries, and thus systems (\ref{emodel1}) and (\ref{emodel2}) are inconsistent. In this example, some entries $p_{ij}$ of $P$ are small (for example $p_{21} = 0$), which suggests that non-diagonal entries of the mixing matrix must also be sufficiently large in order for the Eulerian and Lagrangian frameworks to be consistent.
In  Proposition \ref{prop:sufficient} below, we prove   
that a sufficient condition for the consistency of systems (\ref{emodel1}) and (\ref{emodel2}) is that the off-diagonal entries of the mixing matrix belong to an intermediate range $p_* < p_{ij} < p^*$.

\begin{proposition}\label{prop:sufficient}
Assume A\ref{a:M}-A\ref{a:Pinvertible} and let $P = (p_{ij})_{i,j\leq n}$ be the mixing matrix associated with system (\ref{emodel1}), where $n\geq 2$. Let $p_*$ and $p^*$ be constant numbers in the interval $(0,1)$ such that 
\begin{equation}\label{scondition0}
\frac{4(n-1)^2p^*}{1-p^*} < 1  \text{ and } p_* = \frac{4(n-1)^2(p^*)^2}{1-p^*} \,.
\end{equation}
In addition, suppose that 
\begin{equation}\label{scondition}
p_* \leq p_{ij} \leq p^*, \text{ for } i\neq j \,.
\end{equation}
Then, the systems (\ref{emodel1}) and (\ref{emodel2}) are consistent.
\end{proposition}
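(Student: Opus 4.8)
The plan is to exhibit the required Laplacian explicitly as $L^{\In}=D_\delta^{\lag}P^{-1}-D_\delta^{\eul}$, the candidate already forced by \eqref{relationPL} under A\ref{a:Pinvertible}, and to check that it really is a graph Laplacian. By the computation in \eqref{eqn:sum_col_zero} this matrix has zero column sums, and its diagonal is then determined by the off-diagonal entries; so a matrix of this form is a graph Laplacian exactly when its off-diagonal entries are non-positive (put $A_{ij}=-L^{\In}_{ij}\ge 0$ for $i\neq j$, $A_{ii}=0$, and recover $L^{\In}=W-A$). Since $D_\delta^{\eul}$ is diagonal, for $i\neq j$ the off-diagonal entry is $\delta_i^{\lag}\,p_{ij}'$, where $P^{-1}=(p_{ij}')$, and A\ref{a:abs} together with A\ref{a:pos2} and A\ref{a:P} gives $\delta_i^{\lag}=\sum_k p_{ki}\delta_k^{\eul}>0$. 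Thus the whole proposition reduces to a single claim: under \eqref{scondition0}--\eqref{scondition}, every off-diagonal entry $p_{ij}'$ of $P^{-1}$ is non-positive.

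To prove that claim I would factor $P=D(I+B)$, where $D=\diag\{p_{11},\dots,p_{nn}\}$ is the diagonal part of $P$ and $B=D^{-1}N\ge 0$ is built from the off-diagonal part $N$. A first use of \eqref{scondition0} controls the diagonal: since $4(n-1)^2p^*<1-p^*$ forces $(n-1)p^*<1/4$, each column yields $p_{ii}=1-\sum_{k\neq i}p_{ki}>3/4$, so $D$ is invertible and the entries of $B$ are small. Then $P^{-1}=(I+B)^{-1}D^{-1}$, and because $D^{-1}$ is positive diagonal, the sign of $p_{ij}'$ ($i\neq j$) equals that of $[(I+B)^{-1}]_{ij}$. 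Expanding the Neumann series $(I+B)^{-1}=\sum_{k\ge 0}(-B)^k$ (legitimate once $\norm{B}_\infty<1$) and using $B\ge 0$, for $i\neq j$ one gets $[(I+B)^{-1}]_{ij}=-B_{ij}+\sum_{k\ge 2}(-1)^k(B^k)_{ij}\le -B_{ij}+\sum_{k\ge 2}(B^k)_{ij}$. So it suffices to dominate the higher-order tail by the first-order term $B_{ij}$.

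The remaining work, which I expect to be the technical heart of the argument, is the tail estimate calibrated against the exact constants in \eqref{scondition0}. Writing $c=\max_{i\neq j}B_{ij}\le p^*/\min_i p_{ii}\le \tfrac{4}{3}p^*$ and counting walks of length $k$ from $i$ to $j$ in the loopless support of $B$ gives $(B^k)_{ij}\le (n-1)^{k-1}c^k$; since $(n-1)c<1$ by \eqref{scondition0}, the geometric sum yields $\sum_{k\ge 2}(B^k)_{ij}\le (n-1)c^2/(1-(n-1)c)$. The point of the precise form $p_*=4(n-1)^2(p^*)^2/(1-p^*)$ is that it makes this remainder no larger than $p_*$, while hypothesis \eqref{scondition} gives $B_{ij}=p_{ij}/p_{ii}\ge p_{ij}\ge p_*$; hence $\sum_{k\ge 2}(B^k)_{ij}\le p_*\le B_{ij}$, so $[(I+B)^{-1}]_{ij}\le 0$ and therefore $p_{ij}'\le 0$, completing the reduction. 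The delicate part is exactly that the off-diagonal entries must sit in the band $[p_*,p^*]$: if $p^*$ is too large the Neumann series need not even converge and the tail cannot be dominated, whereas if $p_{ij}$ is allowed to be too small the negative first-order term $-B_{ij}$ becomes too weak to absorb the higher-order corrections. As a sanity check, for $n=2$ the support of $B$ carries no length-two walks, the tail collapses, and the criterion degenerates to $\det P=1-p_{12}-p_{21}>0$, recovering the elementary two-patch condition noted before the proposition.
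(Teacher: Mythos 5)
Your argument is correct and proves the proposition; it follows the same basic strategy as the paper's proof --- reduce consistency to showing that the off-diagonal entries of $P^{-1}$ are non-positive (the zero-column-sum part being exactly \eqref{eqn:sum_col_zero}), expand $P^{-1}$ in a Neumann series whose first-order off-diagonal term is a negative multiple of $p_{ij}$, and use the calibration $p_*=4(n-1)^2(p^*)^2/(1-p^*)$ to dominate the higher-order tail by $p_*\le p_{ij}$ --- but your bookkeeping is genuinely different and, as it happens, tighter. The paper writes $P=I-\Delta P$ and bounds the tail $(\Delta P)^2\sum_{k\ge 0}(\Delta P)^k$ through the induced $1$-norm with $\norm{\Delta P}_1\le 2(n-1)p^*$; you instead factor out the diagonal, $P=D(I+B)$ with $B=D^{-1}N\ge 0$ loopless, and bound the tail entrywise by counting walks, $(B^k)_{ij}\le (n-1)^{k-1}c^k$ with $c\le\tfrac{4}{3}p^*$ (using $(n-1)p^*<\tfrac14$, hence $p_{ii}>\tfrac34$). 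This buys two things: the factor $2$ contributed by the nonzero diagonal of $\Delta P$ disappears because $B_{ii}=0$, and the geometric denominator $1-(n-1)c>\tfrac23$ is controlled honestly, so the tail is at most $\tfrac{8}{3}(n-1)(p^*)^2<4(n-1)^2(p^*)^2/(1-p^*)=p_*\le B_{ij}$, closing the argument with room to spare. By contrast, the paper's displayed step $\norm{\Delta P}_1^2/(1-\norm{\Delta P}_1)\le \left[2(n-1)p^*\right]^2/(1-p^*)$ is not justified as written, since the denominator should be $1-2(n-1)p^*<1-p^*$; your entrywise walk-count sidesteps exactly this issue. Your $n=2$ sanity check is also right: all even powers of $B$ are diagonal there, so the off-diagonal entries of $(I+B)^{-1}$ are sums of non-positive terms once the series converges, which happens precisely when $p_{12}+p_{21}<1$.
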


\begin{proof}
From (\ref{eqn:sum_col_zero}), it follows that the column sums of $D_\delta^{\lag}P^{-1}-D_\delta^{\eul}$ are all zero. Consequently, we just need to show that the off-diagonal entries of $D_\delta^{\lag}P^{-1}-D_\delta^{\eul}$ are non-positive in order to prove that systems (\ref{emodel1}) and (\ref{emodel2}) are consistent. Observe that if $p^*$ goes to zero, then $4(n-1)^2p^*/(1-p^*)$ approaches to zero, so the first part of (\ref{scondition0}) can be satisfied by small enough $p^*$. The condition (\ref{scondition0}) then implies that $$\frac{p_*}{p^*} = \frac{4(n-1)^2p^*}{1-p^*} < 1 \,. $$
Therefore, the interval $[p_*, p^*]$ is non-empty and we can pick $p_{ij}$ such that (\ref{scondition}) holds.
In order to show that the off-diagonal entries of $D_\delta^{\lag}P^{-1}-D_\delta^{\eul}$ are non-positive, it suffices to prove that the off-diagonal entries of $P^{-1}$ are non-positive. Let $\Delta P$ be a matrix such that $P = I-\Delta P$, and consider the matrix 1-norm defined by $\norm{B}_1 =\max_j \sum_i |b_{ij}|$ for a given matrix $B$. We can write $P^{-1} = (I-\Delta P)^{-1} = \mathcal{P}^{(0)} + \mathcal{P}^{(1)}$, where $\mathcal{P}^{(0)} = I+\Delta P$ and $\mathcal{P}^{(1)} = (\Delta P)^2 \sum_{k\geq 0} (\Delta P)^k$. If $i\neq j$, then $\mathcal{P}^{(0)}_{ij} = -p_{ij}$, so we want to show $|\mathcal{P}^{(1)}_{ij}| \leq p_{ij}$ to get that the off-diagonal elements of $P^{-1}$ are non-positive. Indeed, from (\ref{scondition0}) and (\ref{scondition}) it follows that $\norm{\Delta P}_1 \leq 2(n-1)p^*$ and

\begin{align*}
|\mathcal{P}^{(1)}_{ij}|  & \leq  \norm{\mathcal{P}^{(1)}}_1 \\  
& \leq  \norm{\Delta P}_1^2\sum_{k\geq0}\norm{\Delta P}_1^k \\ & =  \norm{\Delta P}_1^2/(1-\norm{\Delta P}_1) \\ & \leq \left[2(n-1)p^*\right]^2/(1-p^*) \\
& = p_* \leq p_{ij} \,,  
\end{align*}
as we desired.    
\end{proof}

Proposition \ref{prop:sufficient} establishes that systems \eqref{emodel1} and \eqref{emodel2} are consistent when the off-diagonal entries of the mixing matrix $P$ lie in an interval $\lbrack p_*, p^* \rbrack \subset (0,1)$.  The upper and lower bounds for this interval satisfy \eqref{scondition0} and \eqref{scondition}.  We note that these bounds are not unique, as more than one $p^*$ can satisfy \eqref{scondition0} and \eqref{scondition}.  
Furthermore, the width of the resulting interval $(p_*,p^*)$ may be small for large $n$.

For example, let $n=3$.  From \eqref{scondition0} it suffices to choose $p^*$ such that $p^*<1/17$. Let us consider, for instance, $p^*=0.05$. Consequently, from \eqref{scondition0} we have $p_* = 0.0421$  and then 
$p_* = 0.0239 \leq p_{ij} \leq p^* = 0.1324$ is a sufficient condition for consistency of systems \eqref{emodel1} and \eqref{emodel2}. We note that this interval may not be the widest interval among those obtained using other values of $p^*$ satisfying \eqref{scondition0}.

Additionally, (\ref{scondition0}) may be improved. Namely, we can write $P^{-1} = \mathcal{P}^{(0)}+\mathcal{P}^{(1)}$ where $\mathcal{P}^{(0)} = I+\Delta P + (\Delta P)^2$ and $\mathcal{P}^{(1)} = (\Delta P)^3 \sum_{k\geq 0} (\Delta P)^k$.  By imposing the condition $p_* \leq p_{ij}\leq p^*$, with $i\neq j$, we have that $p_*$ and $p^*$ must satisfy 

\begin{align}\label{eqn:conditiondetP}
-\mathcal{P}^{(0)}_{ij} & \geq p_* + 2(n-1)p_*^2 - (n-2)(p^*)^2 \notag \\ 
& \geq \frac{\left[2(n-1)p^*\right]^3}{1-p^*} \notag \\ 
& \geq |\mathcal{P}^{(1)}_{ij}| \,.
\end{align}

Figure \ref{fig:condition3cycles} shows the region of pairs $(p_*,p^*)$ that satisfy the inequality (\ref{eqn:conditiondetP}) for $n=5$. Point A of Figure \ref{fig:condition3cycles} indicates that  $p_* = 0.0065 \leq p_{ij} \leq p^* = 0.022$, for $i\neq j$, is a sufficient condition for consistency of systems \eqref{emodel1} and \eqref{emodel2}. This condition improves (\ref{scondition0}), where $p^* = 0.015$ is associated to $p_* = 0.0065$ and the interval $(0.0065,0.015)$, which is smaller than the interval $(0.0065,0.022)$ corresponding to (\ref{eqn:conditiondetP}).

\begin{figure}[H]
\centering
\includegraphics[scale= 0.6]{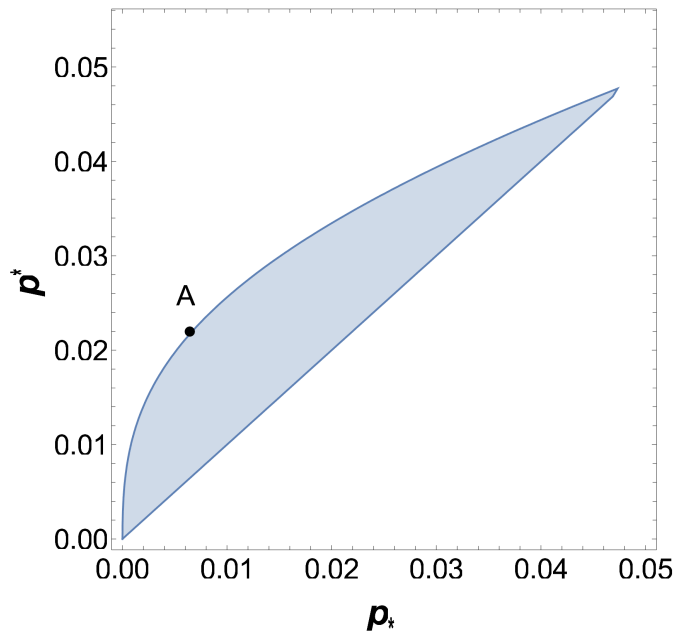}
\caption{Inequality (\ref{eqn:conditiondetP}) holds in the blue region for $n=5$. For every point $(p_*,p^*)$ on the top boundary of the blue region we have a consistency condition. Namely, under A\ref{a:M}-A\ref{a:Pinvertible}, if $p_{ij} \in \lbrack p_* , p^* \rbrack $ for all $i\neq j$, then systems (\ref{emodel1}) and (\ref{emodel2}) are consistent. For example, at the black point A we have $p_* = 0.0065$ and $p^*=0.022$. Therefore, if $0.0065 \leq p_{ij}\leq 0.022$ for $i\neq j$, then the systems (\ref{emodel1}) and (\ref{emodel2}) are consistent.}
\label{fig:condition3cycles}
\end{figure}

\subsubsection{A consistent example: star graphs} \label{Sconsistentexample}

In this section we give an example where the Eulerian and Lagrangian systems are consistent.  Specifically, we consider a setting where the mixing matrix $P$ corresponds to a star graph, where the `hub' node is the only location that residents of other patches visit.  This setting is motivated by empirical networks where there exist nodes $k$ for which all the $p_{ki}$ are large.  For example, in the data analyzed in \cite{ruktanonchai2016identifying} on malaria in Namibia, 
non-residents are much more likely to visit a few locations (e.g. the capital Windhoek) than others.  A schematic of the class of mixing matrices considered in this section is shown in Figure \ref{fig:stargraph}.  We will show that for such $P$, systems \eqref{emodel1} and \eqref{emodel2} are consistent.

\begin{figure}[H]
    \centering
    \includegraphics[scale = 0.6]{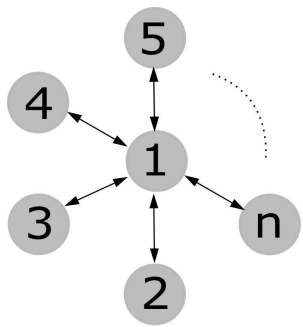}
    \caption{Star graph as example of consistency. }
    \label{fig:stargraph}
\end{figure}

We will use the Sherman-Morrison formula
\begin{equation}\label{eqn:sherman}
\left(A + vu^{\T}\right)^{-1} = A^{-1} - \frac{A^{-1}vu^{\T}A^{-1}}{1 + u^{\T}A^{-1}v} \,, 
\end{equation}
where $A$ is non-singular and $v,u$ are column vectors such that $1 + u^{\T} A^{-1} v \neq 0$.
Let us suppose that 
\begin{equation} \label{nullpij}
P = \diag\{1,1-p_{12}, \ldots, 1-p_{1n}\} + (1,0,\ldots,0)^{\T}(0,p_{12},\ldots,p_{1n})\,, 
\end{equation}
$D_\delta^{\lag} = D_\delta^{\eul} = D_\delta = \delta I$ and let $b_{ij}$ denote the $(i,j)$ entry of $D_\delta^{\lag}P^{-1}- D_\delta^{\eul}$. 
Using the Sherman-Morrison formula (\ref{eqn:sherman}), we get 
$$P^{-1} = \diag\{1,1/(1-p_{12}), \ldots, 1/(1-p_{1n})\} - (1,0,\ldots,0)^{\T}(0,p_{12}/(1-p_{12}),\ldots,p_{1n}/(1-p_{1n}))$$ and 
$$ b_{ij} = \frac{- \delta p_{ij}}{1 - p_{ij}}, \textrm{ for } i\neq j\,.$$

Then, as $p_{ij} < 1$ for $i \neq j$, systems \eqref{emodel1} and \eqref{emodel2} are consistent.

Moreover, let $f: \mathbb{R}^{n\times n} \rightarrow \mathbb{R}^{n\times n}$ be defined by $f(P):=  D_\delta(P^{-1}-I)$, and let $U$ denote the open set $U = \{M \in \mathbb{R}^{n\times n}: \text{ the off-diagonal entries of }M \text{ are negative }\}$. If $P_0$ is of the form (\ref{nullpij}), then $f(P_0) \in U$. 

By continuity of $f$, for $P$ with small enough $p_{ij}, i \neq j$ and $i\neq 1$, we have that $f(P) \in U$, 
i.e, systems (\ref{emodel1}) and (\ref{emodel2}) are consistent.

\subsection{An inconsistent example}\label{sectioninconsistent}

We now present an example where systems \eqref{emodel1} and \eqref{emodel2} are inconsistent, and in fact the upper bound in \eqref{boundconsistency} is attained.

Suppose that

\begin{equation}\label{Pinconsistent}
\begin{array}{ccl}
p_{ij} & := & p_i, i\neq j \,, \\
p_{ii} & :=  & 1- \sum_{k\neq i} p_k \,.
\end{array}
\end{equation}

Let $\theta:= 1- \sum_{k=1}^n p_k \neq 0$. Then we have $$ P =\theta I + (p_1,\ldots, p_n)^{\T} (1, \ldots, 1) \,.$$

As $A := \theta I$ is non-singular and $1+  (1, \ldots, 1)A^{-1} (p_1,\ldots, p_n)^{\T} = 1+ \frac{1}{\theta} \sum_{k=1}^n p_k =\frac{1}{\theta} \neq 0 $, applying the Sherman-Morrison formula gives $$ P^{-1} =\frac{1}{\theta} I - \frac{1}{\theta} (p_1,\ldots, p_n)^{\T} (1, \ldots, 1)\,.$$

In addition, suppose that $\delta_i^{\eul} :=\delta, i=1,\ldots, n$, so that $D_\delta^{\lag} = D_\delta^{\eul} = \delta I$. Therefore, if  $ D_\delta^{\lag}P^{-1}- D_\delta^{\eul} = \delta (P^{-1}-I) := (b_{ij})_{i,j\leq n}$, then we have that
\begin{equation}
\label{eqn:m_sherm}
b_{ij} = -\frac{\delta p_i}{\theta}, \quad i\neq j \,.
\end{equation}
For sufficiently large $p_k$ (such that $\sum_k p_k > 1$) we have $\theta < 0$, leading to positive $b_{ij}$ for $i\neq j$, which is inconsistent with the off-diagonal entries of a Laplacian matrix. In this case the optimum in \eqref{eqn:kkt} is $\bar{m}=0$ and the error is $E = || D_\delta^{\lag}P^{-1}- D_\delta^{\eul}||_F$, which is the largest possible error.

As a specific example, consider three patches with $p_1 = 0.8, p_2 = 0.15, p_3 = 0.15$.  Then $\theta = -0.1 < 0$ and  
\begin{equation}\label{matrixexample}
P=\begin{pmatrix}
  0.7  &  0.8  &  0.8 \\
    0.15 &   0.05  &  0.15 \\
    0.15  &  0.15 &   0.05
\end{pmatrix},
\end{equation}
for which $L=0$ is the solution of (\ref{eqn:E_min}).
Thus it is possible for systems (\ref{emodel1}) and (\ref{emodel2}) to not only be inconsistent, but in fact for the upper bound in \eqref{boundconsistency} to be attained.  We note that the preceding example requires that some of the off-diagonal entries of $P$ are large, which may not be realistic in situations where host individuals spend the majority of their time in a distinguished `home' location.

\section{Two-patch network}\label{section2by2}

In this section we explore results obtained from consistency and inconsistency of systems (\ref{emodel1}) and (\ref{emodel2}) for two-patch systems ($n=2$). 
In Section \ref{sec:finalsize2patches} we compare the final outbreak size and the basic reproduction number obtained from systems (\ref{emodel1}) and (\ref{emodel2})  for an example where the systems are inconsistent. In Section \ref{sec:comparisonhomo} we state  explicit bounds for the relative difference between the basic reproduction numbers of systems (\ref{emodel1}) and (\ref{emodel2}) when the transmission and removal rates are the same for both patches in Proposition \ref{prep:homogeneous}. In Section \ref{sec:comparisonhomo} we also compare the basic reproduction number of both systems for a particular example where the removal rates for the two patches are different.

\subsection{Final outbreak size and basic reproduction number for an inconsistent example}
\label{sec:finalsize2patches}

We now consider the functional implications of consistency / inconsistency of the Eulerian and Lagrangian frameworks, in terms of important disease quantities such as the basic reproduction number and final outbreak size.

Let us consider an example for a two-patch network where $P = \begin{pmatrix} 1-p_{21} & p_{12} \\ p_{21} & 1-p_{12} \end{pmatrix} $, with $p_{12}+p_{21}>1$. Let us also assume that $D_\delta^{\lag} = D_\delta^{\eul} = D_\delta = \delta I$, $D_{\beta_v} = \beta_v I$, $D_\beta = \beta I$, $D_{\delta_v} = \delta_v I$ and $L_v = m_v\begin{pmatrix} 1 & -1 \\ -1 & 1 \end{pmatrix}$.  Then the Eulerian and Lagrangian frameworks are inconsistent by the argument in Section \ref{sectioninconsistent}, and $L=0$ optimizes \eqref{eqn:E_min} in this case, meaning that the solution to \eqref{eqn:E_min} corresponds to a completely disconnected set of nodes in the Eulerian framework.  Thus the networks in the Lagrangian and Eulerian frameworks are wildly different.  Here we examine how this difference in connectivity corresponds to differences in $\R_0$ and final outbreak size.

In Figure \ref{figfinalsize} we use the parameters $\delta = 1/150$, $\beta = 0.3\times 0.1$,  $\delta_v = 0.05$, $m_v=0.02$ \cite{ruktanonchai2016identifying} and define $p_{12}=0.95$, $p_{21}=0.1$ (therefore $ p_{12}+p_{21}>1$, which implies inconsistency). In Figure \ref{figfinalsize}(a) we observe that the final outbreak size obtained from the Lagrangian system is larger than the final outbreak size obtained from the Eulerian system. Furthermore, for values of $\R_0^{\eul}$ around one, we get a significant  relative difference between the outbreak sizes for the two systems. For example, when $\R_0^{\eul} = 1.2$, the outbreak size of Lagrangian system is $220\%$ larger than the outbreak size of the Eulerian system. 
Thus inconsistency in terms of Definition \ref{def:consistency} might lead to significant differences in outbreak size between the Eulerian and Lagrangian frameworks.
Moreover, we can modify the example of Figure \ref{figfinalsize} so the systems are consistent and there is still a significant difference in outbreak size for intermediate values of $\mathcal{R}_0^{\eul}$. 
On the other hand, when $\R_0^{\eul}=1.6$, the percentage change is $25.2\%$, and when $\R_0^{\eul}=0.5$ the final sizes are almost the same. In Figure \ref{figfinalsize}(b) we observe that $\R_0^{\lag}>\R_0^{\eul}$ if $0\leq \R_0^{\eul} \leq 2$.   Additionally, $\R_0^{\lag}$ is linear with respect to $\R_0^{\eul}$, where $\R_0^{\lag}$ is at most $18.4\%$ larger than $\R_0^{\eul}$.

\begin{figure}[H]
\centering
\subfloat[]{\includegraphics[scale=0.4]{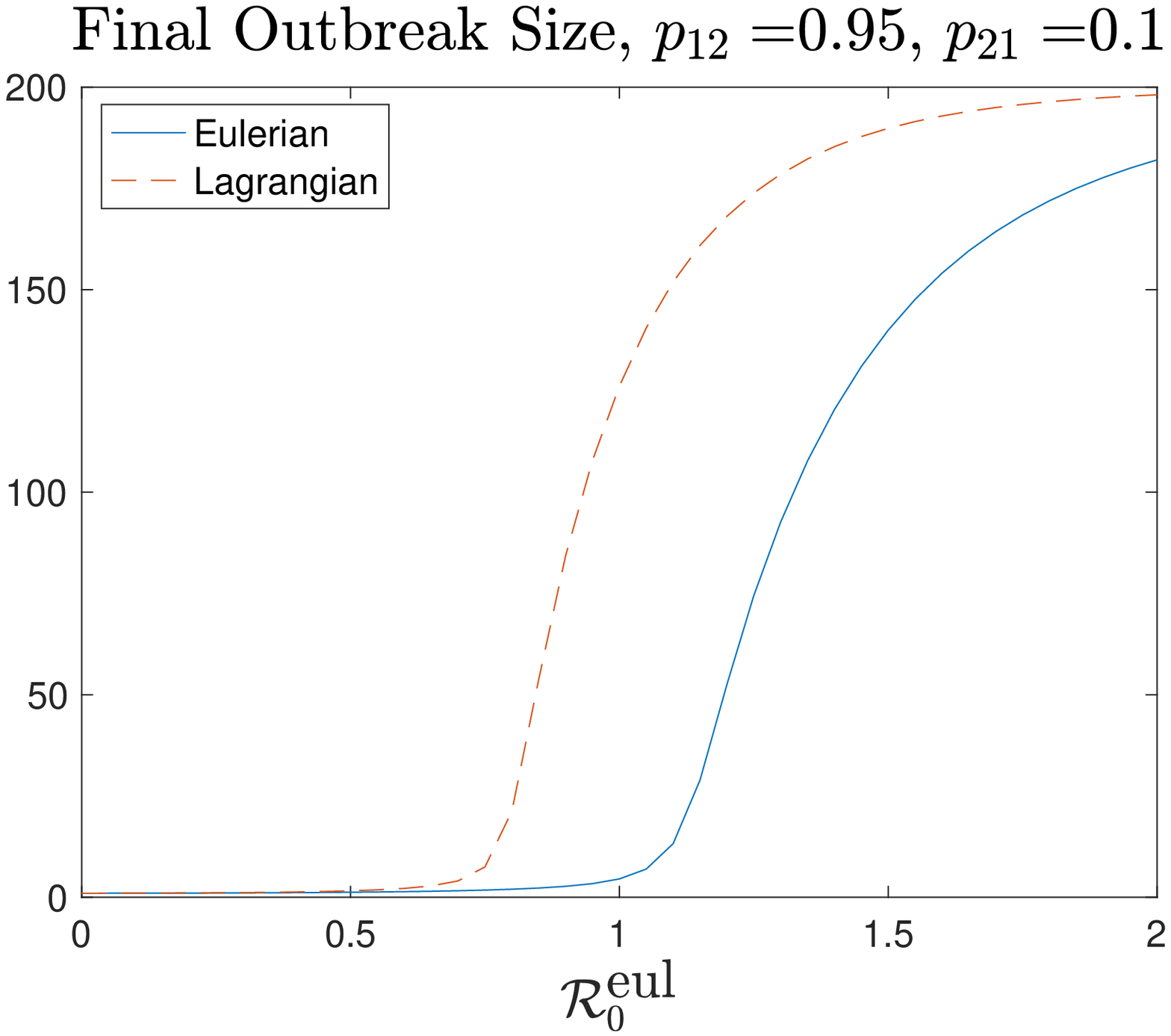}}
\subfloat[]{\includegraphics[scale=0.4]{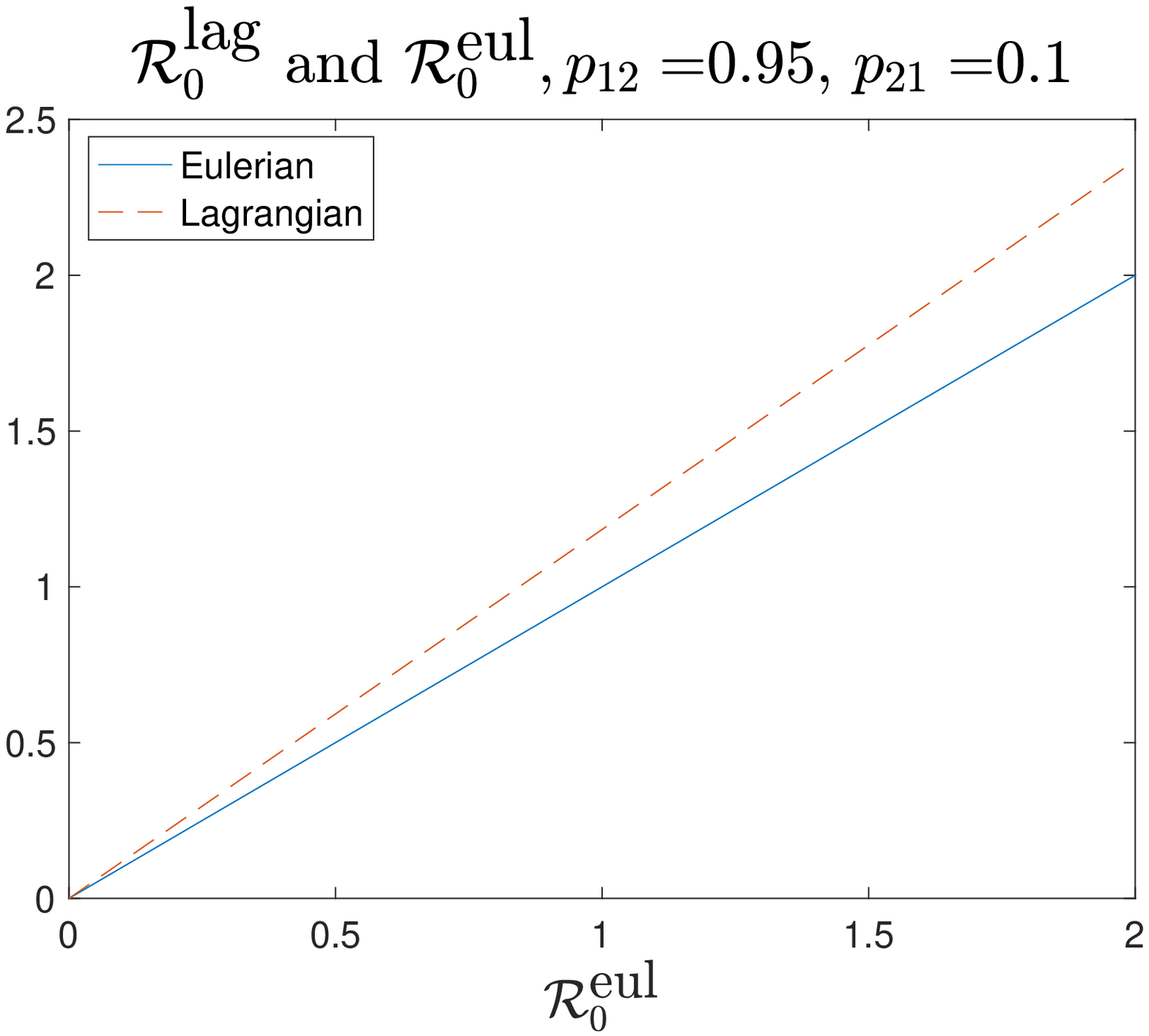}} \\
\caption{In (a) we have the final outbreak size for the inconsistent example given in Section \ref{sec:finalsize2patches} (the recovered individuals $R_1(T)+R_2(T)$ for large enough $T$). In (b) we have the comparison between the basic reproduction number of both models. The used parameters are  $a=0.3, b=0.1, \mu=0,\delta = \gamma = 1/150,\delta_v = 0.05, \beta = ab, m_v=0.02$ \cite{ruktanonchai2016identifying}, and the units are as in Table \ref{param1}. We also assume $G_v= m_v(2I- \mathbf{1} \mathbf{1}^{\T}), N_v=80, \Lambda_v $ such that $G_v^{-1}\Lambda_v = N_v \mathbf{1},  \Lambda^{\lag}= \Lambda^{\eul} = (0,0)^{\T}, N_0 = 100.$ For a given value of $\mathcal{R}_0^{\eul}$ we choose $\beta_{v,1}^{\eul}=\beta_{v,2}^{\eul}$  such that $\beta_v := \beta_{v,i} ^{\eul} =  \left(\mathcal{R}_0^{\eul}\right)^2\delta_v \delta N_0 / (\beta N_v) $. We also define $\beta_{v,1}^{\lag}$ and $\beta_{v,2}^{\lag}$ such that  $D_{\beta_v}^{\lag}=D_{\beta_v}^{\eul} = \beta_v N_v/N_0 I$. The initial conditions are $S_1(0) = N_0, S_2(0) = N_0-1, I_1(0) = 0, I_2(0)=1, R_1(0)=R_2(0)=0, S_{v,1}(0) = S_{v,2}(0) = 50, I_{v,1}=I_{v,2}=0 $ and the final outbreak is taken at time $T=3000$. 
}
\label{figfinalsize}
\end{figure}
%

\subsection{Homogeneous infection} \label{sec:comparisonhomo}

In this section, we consider the special case where parameter values for transmission and removal are the same across both locations in the two-patch network.  We will show that in this case we can bound the difference between the basic reproduction numbers for the Eulerian and Lagrangian frameworks.  Furthermore, we show that in this `homogeneous infection' setting, the basic reproduction number for the Lagrangian model is greater than or equal to the reproduction number for the Eulerian model.

Consider the case where $0<p_{12},p_{21} <1/2$. Under these conditions, the systems (\ref{emodel1}) and (\ref{emodel2}) are consistent. We assume A\ref{a:M}--A\ref{a:Pinvertible}, and pick the same transmission and recovery parameters for both patches.  We refer to this setting as the homogeneous infection scenario. The following proposition describes the behavior of the quantities $\mathcal{R}_0^{\eul}$ and $\mathcal{R}_0^{\lag}$ for $0<p_{12}<1/2$ in the homogeneous infection scenario. In this proposition, we can interchange $p_{12}$ by $p_{21}$ and obtain symmetric results. A proof and additional details are given in Appendix \ref{appendixhomogeneous}.

\begin{proposition}\label{prep:homogeneous}
Assume A\ref{a:M}--A\ref{a:Pinvertible} and suppose that $D_\beta = \beta I$, $D_{\beta_v} = \beta_v 	I$, $D_\delta^{\eul} = D_\delta^{\lag} = D_\delta := \delta I$, $D_{\delta_v} =\delta_v I$ and $m_{12}^v = m_{21}^v = m_v$. Define $\mathcal{M}:=D_{\beta}G_v^{-1}D_{\beta_v}P \left(D_\delta^{\lag}\right)^{-1}$ and fix $p_{21}$ in the interval $(0,1/2)$. Then $\rho(P^{\T}\mathcal{M})$ is a function of $p_{12}$ where $0<p_{12}<1/2$, and we have that:
\begin{enumerate}[a)]
    \item $\left(\R_0^{\eul}\right)^2 = \rho(\mathcal{M}) = (\beta \beta_v)/(\delta \delta_v)$ is constant on $0<p_{12}<1/2$.
    \item $\left(\R_0^{\lag}\right)^2 = \rho(P^{t}\mathcal{M})$ is decreasing on $(0,p_{21})$, increasing on $(p_{21},1/2)$ and attains its absolute minimum over $[0,1/2]$ with value $\left(\R_0^{\eul}\right)^2 = (\beta \beta_v)/(\delta \delta_v)$ at $p_{12}=p_{21}$.
    \item In addition, we have the inequality 
    
\begin{equation}\label{Bound0}
\begin{split}
\frac{\left(\R_0^{\lag}\right)^2-\left(\R_0^{\eul}\right)^2}{\left(\R_0^{\eul}\right)^2} = \frac{\rho(P^{\T}\mathcal{M})-\rho(\mathcal{M})}{\rho(\mathcal{M})}\leq \frac{1}{4}\frac{\delta_v}{ (2m_v + \delta_v)} \, .
\end{split}
\end{equation}

\end{enumerate}
\end{proposition}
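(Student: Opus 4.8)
The plan is to reduce all three claims to the spectral analysis of the symmetric matrix $P^{\T}G_v^{-1}P$, which is possible because in the homogeneous case every diagonal matrix is a scalar multiple of $I$, so that $\mathcal{M}=\tfrac{\beta\beta_v}{\delta}G_v^{-1}P$ and $P^{\T}\mathcal{M}=\tfrac{\beta\beta_v}{\delta}P^{\T}G_v^{-1}P$. Writing $w=(1,-1)^{\T}$, the matrix $G_v^{-1}$ is symmetric and diagonalizes in the orthogonal basis $\{\mathbf{1},w\}$, with $G_v^{-1}\mathbf{1}=\tfrac{1}{\delta_v}\mathbf{1}$ and $G_v^{-1}w=\tfrac{1}{2m_v+\delta_v}w$, giving $G_v^{-1}=\tfrac{1}{2\delta_v}\mathbf{1}\mathbf{1}^{\T}+\tfrac{1}{2(2m_v+\delta_v)}ww^{\T}$. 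For part (a) I would first note that $0<p_{12},p_{21}<1/2$ forces $p_{12}+p_{21}<1$, so the systems are consistent and $G^{-1}=P(D_\delta^{\lag})^{-1}$, whence $(\R_0^{\eul})^2=\rho(D_\beta G_v^{-1}D_{\beta_v}G^{-1})=\rho(\mathcal{M})$. Because the columns of $P$ sum to one, $P^{\T}\mathbf{1}=\mathbf{1}$, so $\mathbf{1}$ is an eigenvector of $G_v^{-1}P$ with eigenvalue $1/\delta_v$; since $\det(G_v^{-1}P)=\det(P)/(\delta_v(2m_v+\delta_v))$ with $\det P=1-p_{12}-p_{21}$, the other eigenvalue is $(1-p_{12}-p_{21})/(2m_v+\delta_v)$, which is positive and strictly smaller than $1/\delta_v$. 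Hence $\rho(G_v^{-1}P)=1/\delta_v$ and $\rho(\mathcal{M})=\tfrac{\beta\beta_v}{\delta\delta_v}$, independent of $p_{12}$.

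For part (b) I would apply the same idea to $P^{\T}G_v^{-1}P$. Using $P^{\T}\mathbf{1}=\mathbf{1}$ and setting $u:=P^{\T}w=(1-2p_{21},\,2p_{12}-1)^{\T}$, the decomposition of $G_v^{-1}$ transforms into $P^{\T}G_v^{-1}P=\tfrac{1}{2\delta_v}\mathbf{1}\mathbf{1}^{\T}+\tfrac{1}{2(2m_v+\delta_v)}uu^{\T}$, a rank-one update of a rank-one matrix whose two eigenvalues are the roots of $\lambda^2-T\lambda+D=0$, with $T$ and $D$ expressed through $\|u\|^2=\alpha^2+r^2$ and $\mathbf{1}^{\T}u=\alpha-r$, where $\alpha:=1-2p_{21}$ and $r:=1-2p_{12}$. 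Fixing $p_{21}$, I would study the larger root $\rho=\tfrac12(T+\sqrt{T^2-4D})$ as a function of $r$, using the identity $\sqrt{T^2-4D}=2\rho-T$ to show that $\operatorname{sign}(d\rho/dr)=\operatorname{sign}(T'\rho-D')=\operatorname{sign}(\rho-c)$ with $c=\tfrac{\alpha+r}{2\delta_v r}$. Plugging $c$ into the characteristic polynomial, the value $\lambda^2-T\lambda+D$ evaluated there has sign equal to $\operatorname{sign}(\alpha-r)$; combining this with the location of $c$ relative to the vertex $T/2$ pins down $\operatorname{sign}(d\rho/dp_{12})=\operatorname{sign}(p_{12}-p_{21})$, i.e.\ $\rho$ decreases on $(0,p_{21})$ and increases on $(p_{21},1/2)$. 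At $p_{12}=p_{21}$ the vectors $u$ and $w$ are parallel and $\mathbf{1}^{\T}u=0$, so the matrix has eigenvalues $1/\delta_v$ and $(1-2p_{21})^2/(2m_v+\delta_v)$; the former dominates, so the minimum value of $\rho$ is $1/\delta_v$ and $(\R_0^{\lag})^2$ there equals $(\R_0^{\eul})^2$.

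Finally, for part (c) I would write the relative difference as $\delta_v\rho-1$, where $\rho=\rho(P^{\T}G_v^{-1}P)$, using $(\R_0^{\lag})^2/(\R_0^{\eul})^2=\delta_v\rho$ from parts (a)--(b). By the monotonicity in (b), for each fixed $p_{21}$ this is maximized as $p_{12}\to 1/2$ (equivalently $r\to 0$), which reduces the problem to a single-variable optimization in $p_{21}$. Rotating into the basis $\{\mathbf{1},w\}$ turns $\delta_v P^{\T}G_v^{-1}P$ into the $2\times2$ matrix $\tfrac12\mathbf{1}\mathbf{1}^{\T}+\tfrac{\delta_v}{2(2m_v+\delta_v)}uu^{\T}$, whose entries are governed by $(\alpha\pm r)^2$; bounding its top eigenvalue and optimizing over $p_{21}$ should produce the factor $\tfrac14\tfrac{\delta_v}{2m_v+\delta_v}$. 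I expect this last step to be the main obstacle. Parts (a) and (b) are exact eigenvalue computations, but (c) asks for a \emph{sharp} constant, so every estimate along the reduction must be tight, and the extremal regime (near $p_{12}=1/2$ with $p_{21}\to 0$), where the two summands of the top eigenvalue are comparable, has to be controlled very carefully; this tight two-term bookkeeping, rather than any structural difficulty, is where the real work lies.
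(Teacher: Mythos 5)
For parts (a) and (b) your argument is correct and takes a genuinely different, and in my view cleaner, route than the paper. The paper writes out $\mathcal{M}$ entrywise, produces a closed-form expression for $\rho(P^{\T}\mathcal{M})$, and differentiates that expression in $p_{12}$; you instead exploit the symmetry of $G_v$, write $G_v^{-1}=\tfrac{1}{2\delta_v}\mathbf{1}\mathbf{1}^{\T}+\tfrac{1}{2(2m_v+\delta_v)}ww^{\T}$, and reduce everything to the symmetric rank-two matrix $\tfrac{1}{2\delta_v}\mathbf{1}\mathbf{1}^{\T}+\tfrac{1}{2(2m_v+\delta_v)}uu^{\T}$ with $u=P^{\T}w=(\alpha,-r)^{\T}$. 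I checked the two sign claims you leave implicit: clearing positive denominators, the characteristic polynomial evaluated at $c=(\alpha+r)/(2\delta_v r)$ equals a positive multiple of $(\alpha-r)\bigl(2m_v+\delta_v-\delta_v r\alpha\bigr)$, whose second factor is positive; and $c-T/2=\tfrac{\alpha}{2\delta_v r}-\tfrac{\alpha^2+r^2}{4(2m_v+\delta_v)}>0$ when $\alpha>r$. Together with $\operatorname{sign}(d\rho/dr)=\operatorname{sign}(r)\operatorname{sign}(\rho-c)$ (note the extra factor $\operatorname{sign}(r)$, harmless since $r=1-2p_{12}>0$ on your domain) this does pin down $\operatorname{sign}(d\rho/dp_{12})=\operatorname{sign}(p_{12}-p_{21})$. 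Two cosmetic slips: $\mathbf{1}$ is a left, not right, eigenvector of $G_v^{-1}P$ (equivalently an eigenvector of $P^{\T}G_v^{-1}$), which still yields the eigenvalue $1/\delta_v$; and the determinant is governed by $(\mathbf{1}\wedge u)^2=(\alpha+r)^2$ rather than by $\mathbf{1}^{\T}u$ directly, though the two are interchangeable once $\norm{u}^2$ is known.

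Part (c) is the genuine gap, and you were right to flag it --- but the obstruction is not ``tight bookkeeping'': if you push your own reduction through, the claimed constant fails. At $r=0$, $\alpha\to1$ your matrix has $T=\tfrac{1}{\delta_v}+\tfrac{1}{2(2m_v+\delta_v)}$ and $D=\tfrac{1}{4\delta_v(2m_v+\delta_v)}$, so $T^2-4D=\tfrac{1}{\delta_v^2}+\tfrac{1}{4(2m_v+\delta_v)^2}$ and the relative difference tends to $\tfrac{1}{2}\bigl(\tau+\sqrt{1+\tau^2}-1\bigr)$ with $\tau=\tfrac{\delta_v}{2(2m_v+\delta_v)}$, which is strictly larger than $\tau/2=\tfrac14\tfrac{\delta_v}{2m_v+\delta_v}$. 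Concretely, with $m_v=\delta_v=1$ and the interior point $p_{12}=0.499$, $p_{21}=0.001$, direct computation gives $\delta_v\,\rho(P^{\T}G_v^{-1}P)\approx 1.0895$, i.e.\ a relative difference of about $0.0895$, exceeding the claimed bound $1/12\approx0.0833$. The discrepancy traces back to the paper's closed form for $\rho(P^{\T}\mathcal{M})$, on which its proofs of (b) and (c) rest: the second summand under its square root should be $\delta_v(2m_v+\delta_v)(p_{12}-p_{21})^2$, not $\bigl[(p_{12}-p_{21})(1-p_{12}-p_{21})\delta_v\bigr]^2$ (the two agree only at $p_{12}=p_{21}$); your $T$ and $D$ are the correct trace and determinant, and they contradict that formula. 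So no amount of careful optimization will deliver \eqref{Bound0} as stated; the supremum of the relative difference is $\tfrac12\bigl(\tau+\sqrt{1+\tau^2}-1\bigr)$, approached as $(p_{12},p_{21})\to(1/2,0)$, and only a slightly weaker prefactor can be salvaged in the stated form.
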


From (\ref{Bound0}) and using that $\R_0^{\lag}>\R_0^{\eul}$, we get the following bound for the relative difference $\mathcal{R}_0^{\lag}$ with respect to $\mathcal{R}_0^{\eul}$:  

$$\frac{\mathcal{R}_0^{\lag}- \mathcal{R}_0^{\eul}}{\mathcal{R}_0^{\eul}} \leq  \frac{1}{4}\frac{\R_0^{\eul}}{\R_0^{\lag}+\R_0^{\eul}} \leq \frac{1}{8} \, .$$
In consequence, the percentage difference between the basic reproduction numbers for systems (\ref{emodel1}) and (\ref{emodel2}) is at most $12.5\% $ under homogeneous infection if $0<p_{12},p_{21}<1/2$. In addition, the larger $|p_{12}-p_{21}|$ is, the larger the difference between $\mathcal{R}_0^{\eul}$ and $\mathcal{R}_0^{\lag}$ is as well, as Figure \ref{fr0lagr0eul} shows. 

\begin{figure}[H]
\centering
\includegraphics[scale=0.5]{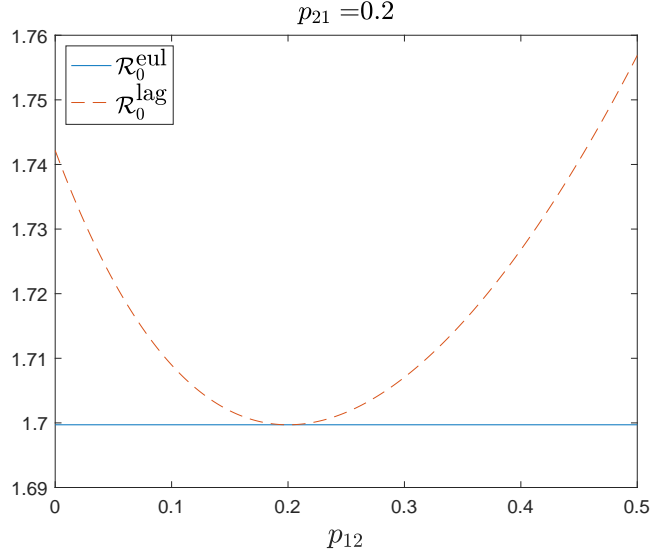}
\caption{Basic reproduction number for system (\ref{emodel1}) in the two-patch case as a function of $ p_{12}$ with $a=0.3, b=0.1, c = 0.214, \delta = r = 1/150, \delta_v = 0.1, m := \left(N_{v,i}^{\lag}\right)^*/\hat{N_i}^*=\left(N_{v,i}^{\eul}\right)^*/\left(N_i^{\eul}\right)^* = 1, \beta = ab, \beta_v = acm $ \cite{ruktanonchai2016identifying}, and units as in Table \ref{param1}. In this case, the largest percentage  difference has value $100 \left(\mathcal{R}_0^{\lag}- \mathcal{R}_0^{\eul}\right)/\mathcal{R}_0^{\eul} = 3.36\%$ and is attained at $p_{12} = 1/2$.}
\label{fr0lagr0eul}
\end{figure}

In conclusion, under homogeneous conditions for both patches, the introduction of infectious individuals creates more secondary infections according to the Lagrangian dynamics for any matrix $P$. By contrast, if we suppose $\delta_{v,1} > \delta_{v,2}$, it is then possible that $\R_0^{\lag} < \R_0^{\eul}$ when $p_{12} > p_{21}$ (see Figure \ref{figdeltas}). 

\begin{figure}[H]
\centering
\includegraphics[scale=0.5]{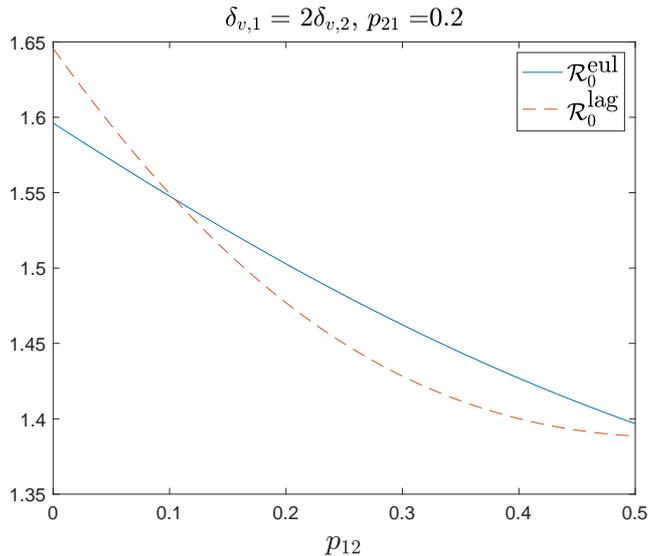}
\caption{ Comparison reproduction numbers for $\delta_{v,1} > \delta_{v,2}$. The used parameters are $a=0.3, b=0.1, c=0.214, \delta_1 = \delta_2 = 1/150, \delta_{v,1} =0.1, \beta_1 = \beta_2 = ab, \beta_{v,1} = \beta_{v,2} = ac,m_v =0.02$ \cite{ruktanonchai2016identifying}, and the corresponding units are as in Table \ref{param1}. 
}
\label{figdeltas}
\end{figure}

\section{Examples using data} \label{SdataP}

We now turn to applying our analytical results and definition of consistency to empirical data.  There is an abundance of empirical data on mobility and connectivity between locations \cite{bengtsson2011improved, lessler2015seven, Wesolowski2012MalariaMove, wesolowski2015impact}, and these data are being increasingly incorporated into mathematical and computational models of vector-borne disease dynamics \cite{iggidr2017vector, ruktanonchai2016identifying}.  Many factors are involved for deciding whether to use a Lagrangian or Eulerian modeling framework, including the spatial scale involved, mathematical tractability, and type of data available.  Here we consider two empirical data sets (Sections \ref{Snamibia}, \ref{Sbrazil}) and two hypothetical data sets (Sections \ref{Siquitos}, \ref{Swestnile}) on host movement in the context of vector-borne disease on spatial scales ranging from within village \cite{vazquez2009usefulness} to country-wide \cite{ruktanonchai2016identifying}, with data sources including long-term GPS trackers \cite{vazquez2009usefulness} and mobile phones \cite{ruktanonchai2016identifying}.  Studies in \cite{iggidr2017vector,ruktanonchai2016identifying} incorporate the corresponding data into a Lagrangian modeling framework. Here we examine whether the data are consistent with an Eulerian framework in the sense of Definition \ref{def:consistency}, and discuss possible implications of using an Eulerian approach for these specific settings in terms of the basic reproduction number.

\subsection{Malaria in Namibia} \label{Snamibia}

Ruktanonchai et al. \cite{ruktanonchai2016identifying} use mobile phone records to examine movement between health districts in Namibia, in the context of malaria control efforts.  Specifically, the authors identified mobility sources / sinks from mobile phone records, together with transmission hot spots from malaria parasite maps.  The mobility data and local transmission parameters are combined in a Lagrangian framework for vector-host dynamics \cite{ruktanonchai2016identifying} . 

Anonymized mobile phone records were collected over a year from 1.2 million phones, corresponding to approximately 85\% of the adult population in Namibia.  Call and SMS data were used to identify locations at the health district level.  Home health districts and location changes were estimated, and aggregated to produce a Lagrangian mixing matrix $P$ \cite{ruktanonchai2016identifying}.  
For most of the locations in these data we have that the quantities $F_i^{(\textrm{in})}:=\sum_{j\neq i}p_{ij}$ and $F_i^{(\textrm{out})} := \sum_{j\neq i}p_{ji}$ are small. \\

Using the data considered in \cite{ruktanonchai2016identifying}, we take the ten health districts for which the quantity $\left(F_i^{(\textrm{in})}\right)^2+\left(F_i^{(\textrm{out})}\right)^2$ is the largest and define the $10\times 10$ mixing matrix $P$ for these locales. Parameter values used in this example are as in \cite{ruktanonchai2016identifying}:
$a=0.3 \hspace{.2cm} \text{Hosts} \times \text{Days}^{-1}$, $b=0.1 \hspace{.2cm} \text{Vectors}^{-1}$, $c = 0.214 \hspace{.2cm} \text{Hosts}^{-1}$, $\delta = r = 1/150 \hspace{.2cm} \text{Days}^{-1}$, $\delta_v = 0.1 \hspace{.2cm} \text{Days}^{-1}$, $m := \left(N_{v,i}^{\lag}\right)^*/\hat{N_i}^*=\left(N_{v,i}^{\eul}\right)^*/\left(N_i^{\eul}\right)^* = 1$, $\beta = ab \hspace{.2cm} \text{Days}^{-1}$, $\beta_v = acm \hspace{.2cm} \text{Days}^{-1}$. 
In this case, systems (\ref{emodel1}) and (\ref{emodel2}) are inconsistent with small relative error $\norm{L - \left(D_\delta^{\lag}P^{-1} - D_\delta^{\eul} \right)}_F/\norm{D_\delta^{\lag}P^{-1} - D_\delta^{\eul}}_F=0.018$, where $L$ is the solution of (\ref{eqn:E_min}). In addition, the Lagrangian basic reproduction number $\mathcal{R}_0^{\lag}=1.7049$ is slightly larger than the Eulerian basic reproduction number $\mathcal{R}_0^{\eul} = 1.6997$. In conclusion, the basic reproduction numbers for both systems are similar, even though  the systems are not consistent (with small relative error).

\subsection{Dengue in Brazil} \label{Sbrazil}

Iggidr et al \cite{iggidr2017vector} use a Lagrangian framework to model dengue in eight locations forming the metropolitan area of Rio de Janeiro. The host density in each location was determined from the national census, and a 20\% host-vector ratio was assumed in their simulations.  The mixing matrix was estimated
from data provided by the transport authority of Rio de Janerio  
(see Appendix B of \cite{iggidr2017vector}).

Consider the parameter values $\delta_i = 1/10.5 \hspace{.2cm} \text{Days}^{-1}$, $\delta_{v,i} = 1/5.5 \hspace{.2cm} \text{Days}^{-1}$, $ m :=\left(N_{v,i}^{\eul}\right)^*/\left(N_i^{\eul}\right)^* = 1$, $\beta_i = 0.3750 \hspace{.2cm} \text{Days}^{-1}$, $ \beta_{v,i} = 0.3750 \hspace{.2cm} \text{Days}^{-1} $ used in \cite{lee2015role}. In this case, systems (\ref{emodel1}) and (\ref{emodel2}) are inconsistent albeit with relative error $\norm{ L - \left(D_\delta^{\lag}P^{-1} - D_\delta^{\eul} \right)}_F/\norm{ D_\delta^{\lag}P^{-1} - D_\delta^{\eul}}_F=0.0277$, where $L$ is the solution of (\ref{eqn:E_min}). Moreover, the Lagrangian basic reproduction number $\mathcal{R}_0^{\lag}=2.8586$ is nearly equal to the Eulerian basic reproduction number $\mathcal{R}_0^{\eul} = 2.8498$. Thus despite the systems being inconsistent, similar reproduction numbers are obtained for the Eulerian and Lagrangian modeling frameworks, suggesting flexibility in using either framework in terms of the domain basic reproduction number.  

\subsection{Dengue in Iquitos, hypothetical mixing matrix} \label{Siquitos}

In \cite{vazquez2009usefulness}, neck trap GPS devices 
were used to register movements of a carpenter and mototaxi driver in Iquitos, Peru for approximately two weeks. The spatial scale of movement here is small, suggesting a Lagrangian framework. The second column of Table \ref{tablecarpenter} (`Time') corresponds to the time spent by the carpenter in $4$ houses (P1, P2, P4, P5) during 14 days according to the GPS data. The third column of Table \ref{tablecarpenter} (`Proportions') corresponds to the ratios between the times in the second column and the total hours in 14 days.  

\begin{table}[H]
\centering
\begin{tabular}{|c|c|c|}
\hline
House & Time (hours) & Proportion \\
\hline
P1 & 12.1 & 0.036\\
\hline
P2 & 3 & 0.008\\
\hline
P4 & 5.4& 0.016\\
\hline
P5 & 18 & 0.053\\
\hline
\end{tabular}
\caption{Time and proportion of time spent in the most frequented four visited houses (other than home) by a carpenter during two weeks in Iquitos \cite{vazquez2009usefulness}.}
\label{tablecarpenter}
\end{table}

The mixing matrix $P$ is a hypothetical arrangement based on the proportions in Table \ref{tablecarpenter}. 

\begin{equation}\label{eqn:PIquitos}
 P= \begin{pmatrix}
 0.887 &   0.036 &   0.036  &  0.036  &  0.036 \\
 0.036 &   0.887  &  0.008  &  0.008  &  0.008 \\
 0.008 &   0.008  &  0.887  &  0.016  &  0.016 \\
 0.016 &   0.016  &  0.016 &   0.887  &  0.053 \\
 0.053 &   0.053 &   0.053  &  0.053  &  0.887
\end{pmatrix},
\end{equation}

$$L = \begin{pmatrix}
    0.0241  & -0.0077 &  -0.0077 &  -0.0077 &  -0.0077 \\
   -0.0082  &  0.0236 &  -0.0014 &  -0.0014 &  -0.0014 \\
   -0.0015  & -0.0015 &   0.0235 &  -0.0034 &  -0.0034 \\
   -0.0028  & -0.0028 &  -0.0028 &   0.0241 &  -0.0121 \\
   -0.0116  & -0.0116 &  -0.0116 &  -0.0116 &   0.0246
\end{pmatrix}. $$

In this example we use the parameters $\delta_i = 1/10.5 \hspace{.2cm} \text{Days}^{-1}$, $\delta_{v,i} = 1/5.5 \hspace{.2cm} \text{Days}^{-1}$, $m := \left(N_{v,i}^{\lag}\right)^*/\hat{N_i}^*=\left(N_{v,i}^{\eul}\right)^*/\left(N_i^{\eul}\right)^* = 1$, $\beta_i = 0.3750 \hspace{.2cm} \text{Days}^{-1}$, $\beta_{v,i} = 0.3750 \hspace{.2cm} \text{Days}^{-1}$ given in \cite{lee2015role}. Here we obtain that systems (\ref{emodel1}) and (\ref{emodel2}) are consistent, i.e, $\norm{L - \left(D_\delta^{\lag}P^{-1} - D_\delta^{\eul} \right) }_F/\norm{D_\delta^{\lag}P^{-1} - D_\delta^{\eul} }_F=0$. Furthermore, the Lagrangian basic reproduction number $\mathcal{R}_0^{\lag}=2.8536$ is approximately the Eulerian basic reproduction number $\mathcal{R}_0^{\eul} = 2.8498$.

\subsection{Migratory hosts, hypothetical mixing matrix}\label{Swestnile}

We consider the hypothetical $3 \times 3$ mixing matrix (\ref{matrixexample}) from Section \ref{sectioninconsistent}, where the hosts in all the three patches spend most of their time in patch $1$. A mixing matrix such as (\ref{matrixexample}) could correspond to the movement of migratory hosts that do not have a sense of home.  In this situation an Eulerian modeling framework is natural. Specifically, consider systems (\ref{emodel1}) and (\ref{emodel2}) with $P$ and $L$ from (\ref{matrixexample}) and patch parameters corresponding to West Nile Virus in migratory birds, as used in \cite{bergsman2016mathematical}. These parameters are $\delta_i = 0.2222 \hspace{.2cm} \text{Days}^{-1}$, $\delta_{v,i} = 0.0666 \hspace{.2cm} \text{Days}^{-1}$, $m := \left(N_{v,i}^{\lag}\right)^*/\hat{N_i}^*=\left(N_{v,i}^{\eul}\right)^*/\left(N_i^{\eul}\right)^* = 1$, $\beta_i = 0.2479 \hspace{.2cm} \text{Days}^{-1}$, $\beta_{v,i} = 0.2479 \hspace{.2cm} \text{Days}^{-1}$. For this example, systems (\ref{emodel1}) and (\ref{emodel2}) with the largest possible relative error, ($L=0$ the solution to the minimization problem \eqref{eqn:E_min}). The Lagrangian basic reproduction number $\mathcal{R}_0^{\lag}= 1.5796 $ is approximately $21 \%$ larger than the Eulerian basic reproduction number $\mathcal{R}_0^{\eul} = 1.3135$. In conclusion, in this example the systems are inconsistent and the connected movement network for the Lagrangian system is not reflected in the disconnected movement network for the Eulerian system. In addition, the difference in the values of the basic reproduction numbers from both systems may be significant. 

Table \ref{dataP} summarizes the comparison between the estimated reproduction numbers in Sections \ref{Snamibia}, \ref{Sbrazil}, \ref{Siquitos} and \ref{Swestnile}, showing that in the inconsistent cases the Lagrangian model gives a larger basic reproduction number, and the differences $\R_0^{\lag} - \R_0^{\eul}$  increases as the error increases for these examples.  For each of these examples, the basic reproduction numbers for the Eulerian and Lagrangian frameworks are similar to one another.  By contrast, in Section \ref{Swestnile}, we have an example where host movement in Eulerian system that we obtain from (\ref{eqn:E_min}) is totally disconnected (i.e., $L=0$), and the difference between the basic reproduction numbers may be significant.

\begin{table}[H]
\centering
\begin{tabular}{llll}
\hline\noalign{\smallskip}
Example &   $E_r$ & $\mathcal{R}_0^{\lag}$ & $\mathcal{R}_0^{\eul}$ \\
\noalign{\smallskip}\hline\noalign{\smallskip}

Malaria data in Namibia in \cite{ruktanonchai2016identifying} & 0.018 & 1.7049 & 1.6997\\
Brazil transportation data in \cite{iggidr2017vector} & 0.0277 & 2.8586 & 2.8498 \\
Dengue in Iquitos, hypothetical $P$ & 0 & 2.8536 & 2.8498\\
Migratory hosts, hypothetical $P$  & 1 & 1.5796 & 1.3135\\
\noalign{\smallskip}\hline
\end{tabular}
\caption{ Relative error $E_r = \norm{ L - \left(D_\delta^{\lag}P^{-1}-D_\delta^{\eul}\right) }_F/\norm{D_\delta^{\lag}P^{-1}-D_\delta^{\eul} }_F$ and values of the basic reproduction numbers $\mathcal{R}_0^{\lag}$ and $\mathcal{R}_0^{\eul}$ in the examples of Sections \ref{Snamibia}, \ref{Sbrazil}, \ref{Siquitos} and \ref{Swestnile}.}
\label{dataP}
\end{table}
%

\section{Discussion} 
\label{sec:conclusions}

Lagrangian and Eulerian approaches are important modeling tools for studying the effects of heterogeneity and movement in disease dynamics \cite{cosner2015models}. 
We have presented an approach for relating the Eulerian and Lagrangian systems through a fundamental matrix by matching the time that infectious individuals reside in other patches. We define the Eulerian and Lagrangian systems to be consistent when the fundamental matrices match, in the sense that the minimum value of the optimization problem \eqref{eqn:E_min} is zero. 

As the star graph example in Section \ref{sec:consistency} and mixing matrix in Section \ref{Swestnile} show, both consistency and inconsistency between the two frameworks is possible.  While we do not have a complete characterization of when the Eulerian and Lagrangian frameworks are consistent, Proposition \ref{prop:sufficient} gives a sufficient condition.  Specifically, Proposition \ref{prop:sufficient} gives intervals $[p_*,p^*]$ such that if all the off-diagonal elements of the mixing matrix are in $[p_*,p^*]$, then the systems are consistent.  The upper bound in the sufficiency criterion can be interpreted in terms of Lagrangian models being suitable for situations where individuals commute from a distinguished home location.  This setting often corresponds to individuals spending the majority of their time in the home location, meaning that the off-diagonal entries of the mixing matrix are small \cite{cosner2015models}.  Inconsistent examples with large off-diagonal elements of the mixing matrix such as in Section \ref{Swestnile} thus conflict with the sense of home that Lagrangian models try to capture. Inconsistency is also possible when the off-diagonal entries are small, as for the mixing matrix in \eqref{eqn:examplesufficient}. Identifying additional necessary criteria for consistency is an area for future work.

In \cite{iggidr2017vector} it is discussed how to go from an Eulerian to a Lagrangian framework where the movement rates are relatively larger than the removal rates. The Eulerian framework considered in \cite{iggidr2017vector} is different from the Eulerian framework considered in this paper.  
Specifically, \cite{iggidr2017vector} use an Eulerian framework with $n^2$ variables corresponding to residents of patch $i$ that are currently located in patch $j$.  For example, $S^h_{ij}(t)$ represents the number of susceptible hosts whose home is patch $i$ and are in patch $j$ at time $t$. The movement rates corresponding to $S^h_{ij}$ are $m_{kj}^{i}$ for $j\neq k$ (movement from $j$ to $k$), from which we can define a graph Laplacian $L_i$ for each home patch $i$. A mixing matrix $P$ can be obtained  from  $L_1,\ldots, L_n$ under the assumption that the movement rates are large compared to the removal rates. This is different than the Eulerian framework we consider, where the movement rates are captured by a single graph Laplacian $L^{\xx}$ for every state $\xx\in\{\Su,\In,\re\}$. In addition, the consistency definition here requires the off-diagonal entries of $D_\delta^{\lag}P^{-1}-D_\delta^{\eul}$ to be non-positive. Therefore, for given $P$ and $D_\delta^{\eul}$, consistency depends only upon the sign of the off-diagonal elements of $P^{-1}$ and does not depend on the removal rates in $D_\delta^{\eul}$. In consequence, in contrast to the time scales assumption in \cite{iggidr2017vector}, the concept of consistency that we present does not depend on the relative timescales of movement to removal.
 
The functional implications of using a Lagrangian versus an Eulerian approach are important to consider.  We find that the domain $\R_0$ values are similar under various scenarios when the two approaches are consistent. In the homogeneous consistent case (Section \ref{sec:comparisonhomo}), we obtain explicit bounds (Proposition \ref{prep:homogeneous}) for the difference in $\R_0$ between the Eulerian and Lagrangian frameworks. Furthermore, the behavior of the Lagrangian basic reproduction number in Proposition \ref{prep:homogeneous} of Section \ref{sec:comparisonhomo} is consistent with studies such as \cite{lee2015role} (in the sense of attaining a minimum value when $p_{12} = p_{21}$, see Fig. 4 of \cite{lee2015role}). 

Although there is inconsistency in the examples of Sections \ref{sectioninconsistent}, \ref{Snamibia}, \ref{Swestnile}, we observe that the obtained values of basic reproduction number are still alike.  
This suggests using \eqref{eqn:E_min} to relate mixing matrices such as those given in \cite{iggidr2017vector,ruktanonchai2016identifying} to Eulerian systems, and then studying the reproduction number of the resulting Eulerian systems using techniques such as those in \cite{jacobsen2018generalized,tien2015disease}, can be an effective approach.  The graph Laplacian matrix $L$ obtained from the optimization problem (\ref{eqn:E_min}) (as in Sections \ref{Snamibia} and \ref{Sbrazil}) allows series expansions for the basic reproduction number, and derivation of important quantities such as the absorption inverse $L^d$ \cite{jacobsen2018generalized} for analyzing the mobility network. For example, $L^d$ captures the effect of absorption (that are the removal rates in this case) on the movement network and also has applications in community detection and node centrality \cite{benzi2019graphs, jacobsen2018generalized}, leading to extensions of the analysis of the mixing matrix $P$. 

In contrast to $\R_0$, we observe significant differences in outbreak size between the Eulerian and Lagrangian approaches.  Indeed, differences in outbreak size can be large not only when the systems are inconsistent (e.g. Figure \ref{figfinalsize}), but for the consistent case as well.  Therefore, care must be taken when going from one approach to the other.  It would be useful to have a bound for the differences in outbreak size between the two approaches. Analytical results quantifying how different the final sizes are is an area for future work.

\section*{Declaration of Competing Interest}
None.

\section*{Acknowledgements}
The authors are grateful to Chris Cosner for feedback on an early version of this manuscript.
This work was supported by the National Science Foundation (DMS 1814737, DMS 1440386), and the Fullbright International Fellow Program.  

\section{Appendix } \label{Sappendix}

\subsection{Basic reproduction number} \label{appendixr0}

In this section of the Appendix we use the next generation matrix approach \cite{van2002reproduction} to derive the expressions for the basic reproduction numbers (\ref{er01}) and (\ref{er02}) of systems (\ref{emodel1}) and (\ref{emodel2}) respectively. 

We first compute the next generation matrix $\left(F^{\lag}\right)\left(V^{\lag}\right)^{-1}$ of the Lagrangian system. Let us consider the equations 
\begin{align} \label{eqn:system1Inf}
\dot{I_i} & = \sum_{j=1}^n \beta_j p_{ji}\frac{S_i}{N_i}I_{v,j} - \left(\gamma_i^{\lag} + \mu_i^{\lag}\right)I_i \notag \\
\dot{I}_{v,i} & = \beta_{v,i} \frac{\sum_{j=1}^n p_{ij}I_j}{\sum_{j=1}^np_{ij}N_j} S_{v,i} + \sum_{j=1}^n m_{ij}^v I_{v,j} - \sum_{j=1}^n m_{ji}^v I_{v,i} - \mu_{v,i} I_{v,i}
\end{align}
corresponding to the infectious compartments of system (\ref{emodel1}). From (\ref{eqn:system1Inf}) we define the function $\mathcal{F}^{\lag}: \mathbb{R}^{2n} \rightarrow \mathbb{R}^{2n}$ by  $$\mathcal{F}_i^{\lag}(I_1,\ldots,I_n,I_{v,1},\ldots,I_{v,n}):= \sum_{j=1}^n \beta_j p_{ji}\frac{S_i}{N_i}I_{v,j}\,,$$ for $i=1,\ldots,n$, and $$\mathcal{F}_i^{\lag}(I_1,\ldots,I_n,I_{v,1},\ldots,I_{v,n}) := \beta_{v,i} \frac{\sum_{j=1}^n p_{ij}I_j}{\sum_{j=1}^np_{ij}N_j} S_{v,i} \,, $$ for $i=n+1, \ldots,2n$. The DFE of the Lagrangian systems is determined by $\left(S_i^{\lag}\right)^* = \left(N_i
^{\lag}\right)^*$ and $\left(S_{v,i}^{\lag}\right)^* = \left(N_{v,i}^{\lag}\right)^*$ defined by (\ref{eqn:DFElagrangian}). We then define the Jacobian matrix $$F^{\lag} := \left. \partial \mathcal{F}^{\lag} / \partial (I_1,\ldots, I_n, I_{v,1}, \ldots, I_{v,n})\right|_{DFE}\,.$$ We have that $\left.\left(\partial \mathcal{F}_i^{\lag} / \partial I_{v,j}\right)\right|_{DFE} = \beta_j p_{ji}$, so the upper-right block of $F^{\lag}$ is $P^{\T} D_{\beta}$, where $D_{\beta}:= \diag\{\beta_i\}$. We also have that $\left. \left(\partial \mathcal{F}_{n+i}^{\lag} / \partial I_j\right)\right|_{DFE} = \beta_{v,i} \left(N_{v,i}^*/\hat{N_i}^*\right) p_{ij} $, so the lower-left block of $F^{\lag}$ is $D_{\beta_v}^{\lag} P$, where $\hat{N_i}^*:= \sum_{j=1}^n p_{ij} \left(N_j^{\lag}\right)^*$ and $D_{\beta_v}^{\lag} := \diag\left\{\beta_{v,i} N_{v,i}^*/\hat{N_i}^* \right\}$. Therefore, we have $$F^{\lag} = \begin{pmatrix}
0 & P^{\T} D_{\beta} \\
D_{\beta_v}^{\lag}P & 0
\end{pmatrix}. $$

Similarly, we define the function $\mathcal{V}^{\lag}: \mathbb{R}^{2n} \rightarrow \mathbb{R}^{2n}$ by $$\mathcal{V}_i^{\lag} := \left(\gamma_i^{\lag} + \mu_i^{\lag}\right)I_i \,,$$ for $i = 1,\ldots,n$, and
$$\mathcal{V}_i^{\lag} :=  - \sum_{j=1}^n m_{ij}^v I_{v,j} + \sum_{j=1}^n m_{ji}^v I_{v,i} + \mu_{v,i} I_{v,i}\,, $$ for $i = n+1,\ldots, 2n$. We also define the Jacobian matrix $$V^{\lag} := \left.\partial \mathcal{V}^{\lag}/ \partial (I_1.\ldots,I_n, I_{v,1},\ldots, I_{v,n})\right|_{DFE}\,.$$ If $L_v$ is the graph Laplacian of the vector movement (with adjacency matrix $M^{v} = \left(m_{ij}^v\right)_{i,j\leq n}$, see (\ref{eqn:laplacian})) and $D_\delta^{\lag}:= \diag\left\{\gamma_i^{\lag}+\mu_i^{\lag}\right\}$, then the upper-left block of $V^{\lag}$ is $D_\delta^{\lag}$ and the lower-right block of $V^{\lag}$ is $G_v = L_v + D_{\delta_v}$. Therefore, $$ V^{\lag} = \begin{pmatrix}
D_\delta^{\lag} & 0 \\
0 & G_v
\end{pmatrix}. $$
Consequently,  $$ \left(F^{\lag}\right)\left(V^{\lag}\right)^{-1} = \begin{pmatrix}
0 & P^{\T}D_{\beta}G_v^{-1} \\
D_{\beta_v}^{\lag}P \left(D_\delta^{\lag}\right)^{-1} & 0
\end{pmatrix} $$
and
\begin{equation*}
(\mathcal{R}_0^{\lag})^2 = \rho\left(P^{\T}D_{\beta}G_v^{-1}D_{\beta_v}^{\lag}P \left(D_\delta^{\lag}\right)^{-1}\right) \,.
\end{equation*}

We now compute the next generation matrix $\left(F^{\eul}\right)\left(V^{\eul}\right)^{-1}$ of system (\ref{emodel2}). The equations of the infectious compartments of system (\ref{emodel2}) are 

\begin{align}\label{eqn:system2Inf}
\dot{I_i} & =  \beta_i \frac{S_i}{N_i}I_{v,i} + \sum_{j=1}^n m_{ij}^{\In} I_{j} - \sum_{j=1}^n m_{ji}^{\In} I_{i}  - \left(\gamma_i^{\eul} + \mu_i^{\eul}\right)I_i \notag \\
\dot{I_{v,i}} & = \beta_{v,i} \frac{I_i}{N_i} S_{v,i} + \sum_{j=1}^n m_{ij}^v I_{v,j} - \sum_{j=1}^n m_{ji}^v I_{v,i} - \mu_{v,i} I_{v,i}\, .
\end{align}

From the equations in (\ref{eqn:system2Inf}) we define the function $$\mathcal{F}^{\eul}(I_1,\ldots, I_n, I_{v,1},\ldots,I_{v,n}):=  \beta_i \frac{S_i}{N_i}I_{v,i}\,,$$ for $i=1,\ldots, n$, and $$\mathcal{F}^{\eul}(I_1,\ldots, I_n, I_{v,1},\ldots,I_{v,n}) = \beta_{v,i} \frac{I_i}{N_i} S_{v,i}\,, $$ for $i=n+1,\ldots, 2n$. 

Using the DFE (\ref{eqn:DFEeulerian}) of system (\ref{emodel2}), we have that $\left.\left(\partial \mathcal{F}_i^{\eul} / \partial I_{v,j}\right)\right|_{DFE} = \beta_i $ and 
$\left. \left(\partial \mathcal{F}_{n+i}^{\eul} / \partial \mathcal{F}_{j}^{\eul}\right)\right|_{DFE} = \beta_{v,i} N_{v,i}^*/\left(N_i^{\eul}\right)^* \,.$ Therefore, if $$F
^{\eul}:= \left. \partial \mathcal{F}^{\eul} /\partial (I_1,\ldots, I_n, I_{v,1},\ldots, I_{v,n})\right|_{DFE}\,,$$ we  then have $$ F^{\eul} = \begin{pmatrix}
0 & D_\beta \\
D_{\beta_v}^{\eul} & 0
\end{pmatrix} \, , $$
where $D_{\beta}:= \diag\{\beta_i\}$ and  $D_{\beta_v}^{\eul} := \diag\left\{\beta_{v,i} N_{v,i}^*/\left(N_i^{\eul}\right)^* \right\}$. Similarly, we define the function $$\mathcal{V}^{\eul}(I_1,\ldots, I_n, I_{v,1},\ldots, I_{v,n}) := - \sum_{j=1}^n m_{ij}^{\In} I_{j} + \sum_{j=1}^n m_{ji}^{\In} I_{i}  + \left(\gamma_i^{\eul} + \mu_i^{\eul}\right)I_i \,, $$ for $i = 1,\ldots,n$, and
$$\mathcal{V}^{\eul}(I_1,\ldots, I_n, I_{v,1},\ldots, I_{v,n}) := - \sum_{j=1}^n m_{ij}^v I_{v,j} + \sum_{j=1}^n m_{ji}^v I_{v,i} + \mu_{v,i} I_{v,i}\,,$$ for $i = n+1,\ldots, 2n$. Therefore, if $$V^{\eul}:= \left.\partial \mathcal{V}^{\eul} / \partial(I_1,\ldots,I_n,I_{v,1}\ldots, I_{v,n})\right|_{DFE}\,,$$ we then have 
\begin{equation}\label{veul}
V^{\eul} = \begin{pmatrix}
G & 0\\
0 & G_v
\end{pmatrix}\, ,
\end{equation}

\noindent where $G := L^{\In} + D_{\delta}^{\eul}$,  $L^{\In}$ is the graph Laplacian of the host movement [with adjacency matrix $M^{\In} = (m_{ij}^{\In})_{i,j\leq n}$, see (\ref{eqn:laplacian})], $D_\delta^{\eul}:= \diag\{\gamma_i^{\eul}+\mu_i^{\eul}\}$, and $G_v = L_v + D_{\delta_v}$. In consequence,
$$ \left(F^{\eul}\right)\left(V^{\eul}\right)^{-1} = \begin{pmatrix}
0 & D_\beta G_v^{-1}\\
D_{\beta_v}^{\eul}G^{-1} &0
\end{pmatrix}$$

\noindent and \begin{equation*}
\left(\mathcal{R}_0^{\eul}\right)^2=\rho\left(D_\beta G_v^{-1}D_{\beta_v}^{\eul}G^{-1}\right).
\end{equation*}

\subsection{Comparison of basic reproduction numbers}\label{appendixhomogeneous}

In this section we prove Proposition \ref{prep:homogeneous} of Section \ref{sec:comparisonhomo}. Let $\beta_{v,1} = \beta_{v,2}$, $N_v^* := N_{v,1}^*=N_{v,2}^*$, $\left(N^{\eul}\right)^* := \left(N_1^{\eul}\right)^* = \left(N_2^{\eul}\right)^*$, and $\beta_v =  \beta_{v,1} N_{v,1}^*/\left(N_1^{\eul}\right)^* = \beta_{v,2} N_{v,2}^*/\left(N_2^{\eul}\right)^*$. Define $\left(N_1^{\lag}\right)^*$ and $\left(N_2^{\lag}\right)^*$  such that $\left(N_1^{\eul}\right)^*= p_{11}\left(N_1^{\lag}\right)^* + p_{12}\left(N_2^{\lag}\right)^*$ and $\left(N_2^{\eul}\right)^* = p_{21}\left(N_1^{\lag}\right)^* + p_{22}\left(N_2^{\lag}\right)^*$. Hence  A\ref{a:DFEs} holds, and $\delta = \delta_1^{\eul} = \delta_2^{\eul}$, so we also get $\delta = \delta_1^{\lag} = \delta_2^{\lag}$ by A\ref{a:abs}. Let $D_\beta = \beta I$,  $D_{\beta_v} = \beta_v 	I$, $D_\delta^{\eul} = D_\delta^{\lag} = D_\delta := \delta I$, $D_{\delta_v} =\delta_v I$, $L_v = m_v \begin{pmatrix}
1 & -1\\ -1 & 1\end{pmatrix}, P = \begin{pmatrix} 1-p_{21} & p_{12} \\ p_{21} & 1-p_{12} \end{pmatrix}$. 

We fix all the parameters except $p_{12}$ (we obtain analogous and symmetric results if we fix $p_{21}$). From (\ref{er01}) and (\ref{er02}), we obtain that $\left(\mathcal{R}_{0}^{\lag}\right)^2 = \rho\left(P^{\T}D_{\beta}G_v^{-1}D_{\beta_v}P \left(D_\delta\right)^{-1}\right) $  and  $ \left(\mathcal{R}_{0}^{\eul}\right)^2=\rho\left(D_\beta G_v^{-1}D_{\beta_v}P(D_\delta \right)^{-1}) $.
Therefore, if we define $\mathcal{M}:=D_{\beta}G_v^{-1}D_{\beta_v}P \left(D_\delta^{\lag}\right)^{-1}$, we get

\begin{equation}
\left(\mathcal{R}_{0}^{\lag}\right)^2=\rho(P^{\T}\mathcal{M}) \quad \textrm{and} \quad \left(\mathcal{R}_{0}^{\eul}\right)^2 = \rho(\mathcal{M})\,.
\end{equation} 
The following is the statement of the proposition.  
\setcounter{proposition}{1}

\begin{proposition}
Assume A\ref{a:M}--A\ref{a:Pinvertible} and suppose that $D_\beta = \beta I$, $D_{\beta_v} = \beta_v 	I$, $D_\delta^{\eul} = D_\delta^{\lag} = D_\delta := \delta I$, $D_{\delta_v} =\delta_v I$ and $m_{12}^v = m_{21}^v = m_v$. Define $\mathcal{M}:=D_{\beta}G_v^{-1}D_{\beta_v}P \left(D_\delta^{\lag}\right)^{-1}$ and fix $p_{21}$ in the interval $(0,1/2)$. Then $\rho(P^{\T}M)$ is a function of $p_{12}$ where $0<p_{12}<1/2$  and we have that:
\begin{enumerate}[a)]
    \item $\rho(\mathcal{M}) = (\beta \beta_v)/(\delta \delta_v)$ is constant on $0<p_{12}<1/2$.
    \item $\rho(P^{t}\mathcal{M})$ is decreasing on $(0,p_{21})$, increasing on $(p_{21},1/2)$ and attains one absolute minimum over $(0,1/2)$ with value $(\beta \beta_v)/(\delta \delta_v)$ at $p_{12}=p_{21}$.
    \item In addition, we have the inequality 
    
\begin{align}
\rho(P^{\T}\mathcal{M})-\rho(\mathcal{M}) & \leq   \frac{\beta \beta_v}{\delta \delta_v (2m_v + \delta_v)}|p_{12}-p_{21}|(1-p_{12}-p_{21})\delta_v\notag \\
& \leq   \frac{\beta \beta_v}{\delta \delta_v (2m_v + \delta_v)}\max\left\{p_{21}(1-p_{21})\delta_v,(1/2-p_{21})^2\delta_v\right \}\notag \\
&\leq \frac{\beta \beta_v}{\delta \delta_v (2m_v + \delta_v)}\frac{\delta_v}{4} \, .
\end{align}

\end{enumerate}
\end{proposition}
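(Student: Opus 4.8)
The plan is to reduce everything to explicit $2\times 2$ computations and then exploit a single orthogonal change of basis that simultaneously simplifies both matrices. Since $D_\beta=\beta I$, $D_{\beta_v}=\beta_v I$ and $D_\delta=\delta I$, we have $\mathcal{M}=\tfrac{\beta\beta_v}{\delta}\,G_v^{-1}P$ and $P^{\T}\mathcal{M}=\tfrac{\beta\beta_v}{\delta}\,P^{\T}G_v^{-1}P$. First I would record $G_v^{-1}=\tfrac{1}{\delta_v(2m_v+\delta_v)}\bigl(m_vJ+\delta_v I\bigr)$, where $J=\mathbf{1}\mathbf{1}^{\T}$. The two structural facts to use are $\mathbf{1}^{\T}P=\mathbf{1}^{\T}$ (columns of $P$ sum to one) and that $(1,-1)^{\T}$ is a right eigenvector of $P$ with eigenvalue $\det P=1-p_{12}-p_{21}$. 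Taking $Q=\tfrac{1}{\sqrt2}\bigl[\mathbf{1},(1,-1)^{\T}\bigr]$ (orthogonal), these give the normal form
\[
Q^{\T}P^{\T}G_v^{-1}PQ=\tfrac{1}{\delta_v(2m_v+\delta_v)}\bigl(\mu\,e_1e_1^{\T}+\delta_v\,vv^{\T}\bigr),\qquad \mu:=2m_v+\delta_v,\quad v:=\bigl(p_{12}-p_{21},\,1-p_{12}-p_{21}\bigr)^{\T},
\]
with $e_1=(1,0)^{\T}$: a sum of two rank-one positive semidefinite matrices, whose eigenvalues I read off from its trace and its determinant $\mu\delta_v(1-p_{12}-p_{21})^2$.

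For part (a), since $\mathbf{1}^{\T}G_v^{-1}=\delta_v^{-1}\mathbf{1}^{\T}$ (because $\mathbf{1}^{\T}(m_vJ+\delta_v I)=\mu\mathbf{1}^{\T}$) and $\mathbf{1}^{\T}P=\mathbf{1}^{\T}$, the vector $\mathbf{1}^{\T}$ is a left eigenvector of $G_v^{-1}P$ with eigenvalue $\delta_v^{-1}$, so $\tfrac{\beta\beta_v}{\delta\delta_v}$ is an eigenvalue of $\mathcal{M}$; the other is then forced by $\det\mathcal{M}$ to be $\tfrac{\beta\beta_v(1-p_{12}-p_{21})}{\delta(2m_v+\delta_v)}$. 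Comparing them, using $0<1-p_{12}-p_{21}<1$ and $\delta_v\le 2m_v+\delta_v$, shows the first dominates, so $\rho(\mathcal{M})=\tfrac{\beta\beta_v}{\delta\delta_v}$ is constant. For part (b), the normal form gives $\rho(P^{\T}\mathcal{M})=\tfrac{\beta\beta_v}{\delta\delta_v(2m_v+\delta_v)}\,\lambda_{\max}$, where $\lambda_{\max}$ is the top eigenvalue of $\mu e_1e_1^{\T}+\delta_v vv^{\T}$. The minimum statement is immediate from the Rayleigh quotient: $\lambda_{\max}\ge e_1^{\T}(\mu e_1e_1^{\T}+\delta_v vv^{\T})e_1=\mu+\delta_v(p_{12}-p_{21})^2\ge\mu$, with equality exactly at $p_{12}=p_{21}$, which recovers the value $\tfrac{\beta\beta_v}{\delta\delta_v}=\rho(\mathcal{M})$ and hence $\mathcal{R}_0^{\lag}\ge\mathcal{R}_0^{\eul}$. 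For the monotonicity I would differentiate $\lambda_{\max}^2-T\lambda_{\max}+D=0$ implicitly to get $\lambda_{\max}'=(T'\lambda_{\max}-D')/(\lambda_{\max}-\lambda_{\min})$ with positive denominator, then show the numerator has the sign of $p_{12}-p_{21}$ and vanishes only at $p_{12}=p_{21}$; this identifies the single interior critical point and yields decrease on $(0,p_{21})$ and increase on $(p_{21},1/2)$.

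For part (c) I would first compute the spectral gap exactly: evaluating the characteristic polynomial at $\mu$ gives $(\lambda_{\max}-\mu)(\mu-\lambda_{\min})=\delta_v\mu(p_{12}-p_{21})^2$, hence $\lambda_{\max}-\mu=\delta_v\mu(p_{12}-p_{21})^2/(\mu-\lambda_{\min})$. A lower bound on the gap $\mu-\lambda_{\min}$ then produces the first inequality, and the remaining two are elementary: the middle bound is the maximization of $g(p_{12})=|p_{12}-p_{21}|(1-p_{12}-p_{21})$ over $[0,1/2]$, whose maximum is $\max\{p_{21}(1-p_{21}),(1/2-p_{21})^2\}$ (each branch is a parabola with vertex at $1/2$), and the last bound follows from $p_{21}(1-p_{21})\le\tfrac14$ and $(1/2-p_{21})^2<\tfrac14$.

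I expect the main obstacle to be precisely the first inequality in (c): one needs a lower bound on the spectral gap $\mu-\lambda_{\min}$ sharp enough to produce exactly the factor $\delta_v|p_{12}-p_{21}|(1-p_{12}-p_{21})$. The crude estimate $\lambda_{\min}\le\delta_v(1-p_{12}-p_{21})^2$ (reading off the $(2,2)$ diagonal entry of the normal form) controls most of the parameter region but degrades as $\max(p_{12},p_{21})\to\tfrac12$ with $m_v$ small, so the careful treatment of that regime — squeezing $\mu-\lambda_{\min}$ using the exact trace/determinant relations rather than a single diagonal comparison — is the technical heart of the argument; the subsequent passage to the clean $\tfrac{\delta_v}{4}$ bound is then routine.
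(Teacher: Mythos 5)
Your orthogonal-conjugation route (reducing $P^{\T}G_v^{-1}P$ to $\tfrac{1}{\delta_v(2m_v+\delta_v)}(\mu e_1e_1^{\T}+\delta_v vv^{\T})$ and reading off trace and determinant) is genuinely different from the paper's, which writes $\mathcal{M}$ and $P^{\T}\mathcal{M}$ entrywise, states a closed-form radical expression for $\rho(P^{\T}\mathcal{M})$, and differentiates it. Your normal form is correct, and your treatment of (a) and of the minimum in (b) is complete and cleaner than the paper's: the Rayleigh bound $\lambda_{\max}\geq\mu+\delta_v(p_{12}-p_{21})^2$ gives the location and value of the minimum and the inequality $\R_0^{\lag}\geq\R_0^{\eul}$ in one stroke. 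The monotonicity claim in (b) is left as a to-do: with $s=p_{12}-p_{21}$ and $d=1-p_{12}-p_{21}$ the numerator $T'\lambda_{\max}-D'=2\delta_v\left[(s-d)\lambda_{\max}+\mu d\right]$ does have the sign of $s$, but this is not obvious when $0<s<d$; it can be closed by checking that the characteristic polynomial $q$ satisfies $q\left(\tfrac{\mu d}{d-s}\right)=\tfrac{\mu d s}{(d-s)^2}\left[\mu-\delta_v(d^2-s^2)\right]$, so that $\lambda_{\max}<\tfrac{\mu d}{d-s}$ precisely when $s>0$. That gap is fillable.

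Part (c) is where you stall, and your instinct about why is exactly right, but the conclusion is stronger than ``technically hard.'' Your gap identity $(\lambda_{\max}-\mu)(\mu-\lambda_{\min})=\mu\delta_v(p_{12}-p_{21})^2$ is correct, and the first inequality of (c) is equivalent to the spectral-gap bound $\mu-\lambda_{\min}\geq\mu|s|/d$, which fails in precisely the regime you single out. Indeed your (correct) trace/determinant computation gives discriminant $(T/2)^2-D=(m_v+c\delta_v)^2+(2m_v+\delta_v)\delta_v s^2$ with $c=p_{12}(1-p_{12})+p_{21}(1-p_{21})$, whereas the paper's displayed formula for $\rho(P^{\T}\mathcal{M})$ has $(sd\delta_v)^2$ as the second term under the radical; the paper's entire proof of (c) is one subadditivity step applied to that incorrect radical. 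Concretely, with $m_v=0$, $\delta_v=1$, $p_{21}=0.01$, $p_{12}=0.49$ one finds $\lambda_{\max}(P^{\T}P)\approx 1.286$, so the relative difference $\approx 0.286$ exceeds both the claimed first bound $|s|d=0.24$ and the final bound $1/4$. So (c) as stated is false, no spectral-gap estimate will rescue your argument, and the correct report is that parts (a) and (b) of your proposal go through while (c) is an error in the proposition (a corrected first bound would have to use $\sqrt{(2m_v+\delta_v)\delta_v}\,|s|$ in place of $\delta_v|s|d$).
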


\begin{proof}
We have that $$ \mathcal{M}=D_{\beta}G_v^{-1}D_{\beta_v}P \left(D_\delta ^{\lag}\right)^{-1} = \frac{\beta \beta_v}{\delta \delta_v (2m_v + \delta_v)} \begin{pmatrix} m_v + (1-p_{21})\delta_v & m_v + p_{12}\delta_v \\ m_v + p_{21}\delta_v & m_v + (1-p_{12})\delta_v\end{pmatrix} \, ,$$

\noindent so the eigenvalues of $\mathcal{M}$ in this case are $$ \frac{\beta \beta_v}{\delta \delta_v} \quad\textrm{and}\quad \frac{\beta \beta_v}{\delta \delta_v} \frac{(1-p_{12}-p_{21})\delta_v}{2m_v+\delta_v}.$$

In consequence, we get $\rho(\mathcal{M}) = (\beta \beta_v)/(\delta \delta_v)$.
We can also get $\rho(P^{\T}\mathcal{M})$ explicitly by 
\begin{align}
\rho(P^{\T}\mathcal{M}) = \frac{\beta \beta_v}{\delta \delta_v (2m_v + \delta_v)}\left[m_v + \delta_v - \left(p_{12}(1-p_{12})+p_{21}(1-p_{21})\right)\delta_v + \right. \notag \\
\left.\sqrt{\left[m_v + \left(p_{12}(1-p_{12})+p_{21}(1-p_{21})\right)\delta_v\right]^2 + \left[(p_{12}-p_{21})(1-p_{12}-p_{21})\delta_v\right]^2} \right]\, .
\end{align}

From this equation, it follows that when $p_{12}=p_{21}$, we get
$$ \rho(P^{\T}\mathcal{M})= \frac{\beta \beta_v}{\delta \delta_v} = \rho(\mathcal{M})\, . $$

Moreover, $ \partial \rho(P^{\T}\mathcal{M})/\partial p_{12} = \kappa \beta \beta_v/\left[\delta \delta_v (2m_v + \delta_v)\right] $, where
\begin{align}
 \kappa = 2p_{12}-1 + 
\frac{m_v(1-2p_{21}) + \delta_v \left[2p_{12}(1-p_{12})(1-2p_{12}) + (p_{12}-p_{21})\left(2p_{12}(1-p_{12}-p_{21})+p_{12}+p_{21}\right)\right]}{\sqrt{\left[m_v + \left(p_{12}(1-p_{12})+p_{21}(1-p_{21})\right)\delta_v\right]^2 + \left[(p_{12}-p_{21})(1-p_{12}-p_{21})\delta_v\right]^2}}\,.
\end{align}

In particular, if $p_{12}=p_{21}$, then $$\frac{\partial \rho(P^{\T}\mathcal{M})}{\partial p_{12}}= 0 \, .$$

Assume that $p_{12} > p_{21}$. Define

\begin{align*}
\alpha & := \sqrt{\left[m_v + \left(p_{12}(1-p_{12})+p_{21}(1-p_{21})\right)\delta_v\right]^2 + \left[(p_{12}-p_{21})(1-p_{12}-p_{21})\delta_v\right]^2} \\ 
& \leq \left[m_v + \left(p_{12}(1-p_{12})+p_{21}(1-p_{21})\right)\delta_v\right] + \left[(p_{12}-p_{21})(1-p_{12}-p_{21})\delta_v\right] \, , 
\end{align*}

\noindent and  $$\eta:= m_v(1-2p_{21}) + \delta_v \left[2p_{12}(1-p_{12})(1-2p_{12}) + (p_{12}-p_{21})\left(2p_{12}(1-p_{12}-p_{21})+p_{12}+p_{21}\right)\right]\,.$$ We then have that
$$ \frac{\partial \rho(P^{\T}\mathcal{M})}{\partial p_{12}}= \frac{\beta \beta_v}{\delta \delta_v (2m_v + \delta_v)} \left[2p_{12}-1 + \frac{\eta}{\alpha}\right]$$ and
\begin{align*}
(2p_{12}-1)\alpha +\eta & \geq (2p_{12}-1)\left[m_v + \left(p_{12}(1-p_{12})+p_{21}(1-p_{21})\right)\delta_v + (p_{12}-p_{21})(1-p_{12}-p_{21})\delta_v\right] + \eta \\
& = 2(p_{12}-p_{21})\left[m_v +  \left( (p_{12}-1)^2 + p_{12}^2\right) \delta_v \right] > 0 \, . 
\end{align*}

Therefore, $p_{12} > p_{21}$ implies that $$\frac{\partial \rho(P^{\T}\mathcal{M})}{\partial p_{12}}= \frac{\beta \beta_v}{\delta \delta_v (2m_v + \delta_v)} \frac{(2p_{12}-1)\alpha + \eta}{\alpha} >0 \,. $$

Now, let us assume that $p_{12} <p_{21}$. We then have that
$$ \alpha \geq m_v + \left(p_{12}(1-p_{12})+p_{21}(1-p_{21})\right)\delta_v $$ and 
\begin{align*}
(2p_{12}-1)\alpha +\eta & \leq (2p_{12}-1)\left[m_v + \left(p_{12}(1-p_{12})+p_{21}(1-p_{21})\right)\delta_v\right]+ \eta \\
&=-(p_{21}-p_{12})(m_v + \delta_v ( 1 + (p_{21}-p_{12})(1-2p_{12}) ) )< 0\, .   
\end{align*}

Therefore, $p_{12} < p_{21}$ implies that $$\frac{\partial \rho(P^{\T}\mathcal{M})}{\partial p_{12}}= \frac{\beta \beta_v}{\delta \delta_v (2m_v + \delta_v)} \frac{(2p_{12}-1)\alpha + \eta}{\alpha} <0 \, . $$

Using that $\alpha \leq \left[m_v + \left(p_{12}(1-p_{12})+p_{21}(1-p_{21})\right)\delta_v\right] + |p_{12}-p_{21}|(1-p_{12}-p_{21})\delta_v $, we get 
\begin{align*}
\rho(P^{\T}\mathcal{M})-\rho(\mathcal{M}) & \leq   \frac{\beta \beta_v}{\delta \delta_v (2m_v + \delta_v)}|p_{12}-p_{21}|(1-p_{12}-p_{21})\delta_v\\
& \leq   \frac{\beta \beta_v}{\delta \delta_v (2m_v + \delta_v)}\max\left\{p_{21}(1-p_{21})\delta_v,(1/2-p_{21})^2\delta_v\right \}\\
&\leq \frac{\beta \beta_v}{\delta \delta_v (2m_v + \delta_v)}\frac{\delta_v}{4} \, .
\end{align*}

\end{proof}

Let us try to get some intuition for the inequality $\R_0^{lag} \geq \R_0^{eul}$ in the previous proposition. Suppose that $p_{21} < p_{12}$ and all the other parameters are assumed to be as in Proposition \ref{prep:homogeneous}. Assume that the systems (\ref{emodel1}) and (\ref{emodel2}) are at the DFE and suppose we introduce the same number of infectious hosts in both patches, say $I_h = I_{h,1} = I_{h,2}$. 
From the last equation in (\ref{emodel1}) of the Lagrangian system and A\ref{a:DFEs}, the rates at which vectors get infected in patches $1$ and $2$ are 

$$ \beta_{v,1} \frac{p_{11} I_{h,1} + p_{12} I_{h,2}}{ p_{11}\left(N_1^{\lag}\right)^* + p_{12}\left(N_2^{\lag}\right)^*} N_{v,1}^* = \beta_{v} I_h (1+p_{12}-p_{21})$$
and
$$ \beta_{v,2} \frac{p_{21} I_{h,1} + p_{22} I_{h,2}}{ p_{21}\left(N_1^{\lag}\right)^* + p_{22}\left(N_2^{\lag}\right)^*} N_{v,2}^* = \beta_{v} I_h(1+p_{21}-p_{12})$$
respectively. Since $p_{12}>p_{21}$, the vector infection rate in patch $1$ is greater than in patch $2$ (because $1+p_{12}-p_{21} > 1 > 1+p_{21}-p_{12}$). Therefore, over a period $\Delta t$, the number of infected vectors at patch $1$, which is approximately $\Delta I_{v,1}^{\lag}:= \beta_{v} I_h (1+p_{12}-p_{21})\Delta t$, would be larger than the amount of infected vectors at patch $2$, which is approximately $\Delta I_{v,2}^{\lag}:= \beta_{v} I_h (1+p_{21}-p_{12}) \Delta t$. From the second equation in (\ref{emodel1}), the amount of new host infections in patches $1$ and $2$ caused by the new infected vectors in DFE would be $$ \Delta I_{h,1}^{\lag} := \beta_1 p_{11}\left(\Delta I_{v,1}^{\lag}\right) + \beta_2 p_{21}\left(\Delta I_{v,2}^{\lag}\right) = \beta\left( p_{11}\left(\Delta I_{v,1}^{\lag}\right) +  p_{21}\left(\Delta I_{v,2}^{\lag}\right)\right) $$ and $$ \Delta I_{h,2}^{\lag} := \beta_1 p_{12}\left(\Delta I_{v,1}^{\lag}\right) + \beta_2 p_{22}\left(\Delta I_{v,2}^{\lag}\right)=\beta\left( p_{12}\left(\Delta I_{v,1}^{\lag}\right) +  p_{22}\left(\Delta I_{v,2}^{\lag}\right)\right) $$
respectively. Therefore, the total amount of new infected hosts would be
\begin{equation}\label{eqn:heuristics1}
   \Delta I_h^{\lag}:= \Delta I_{h,1}^{\lag}+\Delta I_{h,2}^{\lag} = \beta \left(\Delta I_{v,1}^{\lag} + \Delta I_{v,2}^{\lag} \right) +\beta  (p_{12}-p_{21})\left(\Delta I_{v,1}^{\lag}-\Delta I_{v,2}^{\lag}\right)\,. 
\end{equation}

On the other hand, from the last equation in (\ref{emodel2}) of the Eulerian system, the rates at which vectors get infected in patches $1$ and $2$ are $$ \beta_{v,1} \frac{I_{h,1}}{\left(N_1^{\eul}\right)^*} N_{v,1}^* = \beta_{v} I_h \quad\text{and}\quad \beta_{v,2} \frac{I_{h,2}}{\left(N_2^{\eul}\right)^*} N_{v,2}^* = \beta_{v} I_h$$
respectively. Therefore, over a period $\Delta t$, the amounts of new infected vectors are $\Delta I_{v,1}^{\eul} = \Delta I_{v,2}^{\eul} = \beta_v I_h \Delta t$. Notice that $$ \Delta I_{v,1}^{\lag} + \Delta I_{v,2}^{\lag} = 2 \beta_{v} I_h  \Delta t =  \Delta I_{v,1}^{\eul} + \Delta I_{v,2}^{\eul}\,. $$
From the second equation in (\ref{emodel2}) of the Eulerian system, the total amount of new infected hosts is

\begin{equation}\label{eqn:heuristics2}
    \Delta I_{h}^{\eul} =  \beta (\Delta I_{v,1}^{\eul} + \Delta I_{v,2}^{\eul}) = \beta (\Delta I_{v,1}^{\lag} + \Delta I_{v,2}^{\lag}) \, .
\end{equation}

 Since $p_{12}>p_{21}$ and $\Delta I_{v,1}^{\lag} > \Delta I_{v,2}^{\lag}$, we get that the term $\beta  (p_{12}-p_{21})\left(\Delta I_{v,1}^{\lag}-\Delta I_{v,2}^{\lag}\right)$ in (\ref{eqn:heuristics1}) is positive, and then using (\ref{eqn:heuristics2}), we have that $\Delta I_h^{\lag} > \Delta I_h^{\eul}$, which corresponds to $\R_0^{\lag} \geq \R_0^{\eul}$.  Note that this implicitly relies upon the fact that the removal rates, the rates $\beta_{v,i}$ and the rates $\beta_i$ are the same across patches.  As we observed in Figure \ref{figdeltas}, imposing different removal rates, for instance, can lead to $\R_0^{\eul} > \R_0^{\lag}$.

\bibliography{biblioLagEul}
\bibliographystyle{plain}

\end{document}